\newtheorem{thm}{Theorem}[section]
\newtheorem{prop}[thm]{Proposition}
\theoremstyle{definition}
\theoremstyle{remark}
\numberwithin{equation}{section}
\begin{document}
\title[Thermal conductivity]{Thermal conductivity and local thermodynamic equilibrium
of stochastic energy exchange models}
\author{Yao Li}
\address{Yao Li: Department of Mathematics and Statistics, University
  of Massachusetts Amherst, Amherst, MA, 01002, USA}
\email{yaoli@math.umass.edu}

\author{Wenbo Xie}
\address{Wenbo Xie: Department of Mathematics and Statistics, University
  of Massachusetts Amherst, Amherst, MA, 01002, USA}
\email{wenboxie@umass.edu}
\thanks{Wenbo Xie was partially supported by the REU program at University
  of Massachusetts Amherst}

\keywords{nonequilibrium steady state, thermal conductivity, local
  thermodynamic equilibrium}

\begin{abstract}
  In this paper we study macroscopic thermodynamic properties of a
  stochastic microscopic heat conduction model that is reduced from
  deterministic problems. Our goal is to numerically check how the
  ``low energy site effect'' inherited from the deterministic model
  would affect the macroscopic thermodynamic properties such as the
  thermal conductivity and the local thermodynamic equilibrium. After
  a series of numerical computations, our
  conclusion is that neither the thermal conductivity nor the
  existence of local thermodynamic equilibrium is qualitatively
  changed by this effect. 
\end{abstract}
\maketitle

\section{Introduction}
In general, nonequilibrium statistical mechanics is not as
well-developed as its equilibrium counterpart. Mathematical
justifications to many fundamental problems in nonequilibrium
statistical physics are not complete yet. The derivation of Fourier's
law from microscopic Hamiltonian dynamics is one of such century-old
challenge. It is not clear yet how macroscopic thermodynamic laws
including Fourier's law can be rigorously proved from the motion and
interactions of a large number of Newtonian particles \cite{bonetto2000fourier}. 

A more precise example is a long and thin tube that contains many kinetic
particles. A particle only does free motion and elastic
collisions. Now assume two ends of this tube is thermalized in a way
that the particle collides with a random particle chosen from a
Boltzmann distribution when hitting the left or right boundary. When the
temperatures of these two Boltzmann distributions are distinct, the
system is driven out from its thermal equilibrium by the boundary
effect. Needless to say, this problem is far beyond the reach of
today's dynamical systems technique. In fact, most results about
dynamical billiards are for one-particle billiard systems
\cite{chernov2005billiards}, with only a few exceptions
\cite{simanyi1999hard, simanyi2003proof}.

In our earlier paper \cite{li2015chaos}, we attempted to reduce this
``particle in a tube'' problem to a mathematically tractable
stochastic energy exchange model by numerical simulations. The idea is
to divide the tube into a large number of localized cells as in
\cite{bunimovich1992ergodic}, such that each particle is trapped in a
cells, but collisions between particles in adjacent cells are still
allowed through the opening between neighboring cells. See Figure \ref{table} for the detail. Then we use
numerical tools to investigate the rule of energy exchanges between
cells. We refer Section 2.2 for a detailed description of the energy
exchange rule. This gives a stochastic energy exchange
model that approximates the time evolution of the energy profile of
the billiards model. The stochastic energy exchange model consists of
a chain of sites that is connected to two heat baths at its
ends. Each site carries some energy, which can be exchanged with
neighboring sites at exponentially distributed random times. The rule
of energy redistribution 
at an energy exchange is also random. Many rigorous results can be
proved for the resultant 
stochastic energy exchange model. Among which, our earlier paper
\cite{li2016polynomial} rigorously proved that the speed of convergence to the
steady state, i.e., the nonequilibrium steady state (NESS), of this model is polynomial. On the other hand, a slightly different stochastic
exchange model can be derived by working on the time rescaling limit
when particles in adjacent cells barely collide \cite{gaspard2008heat,
gaspard2008heat2, gaspard2008derivation}. It is
known that the speed of convergence of the second model is exponential
\cite{grigo2012mixing, li2013existence, sasada2015spectral}. 

\begin{figure}[h]
\centerline{\includegraphics[width = \linewidth]{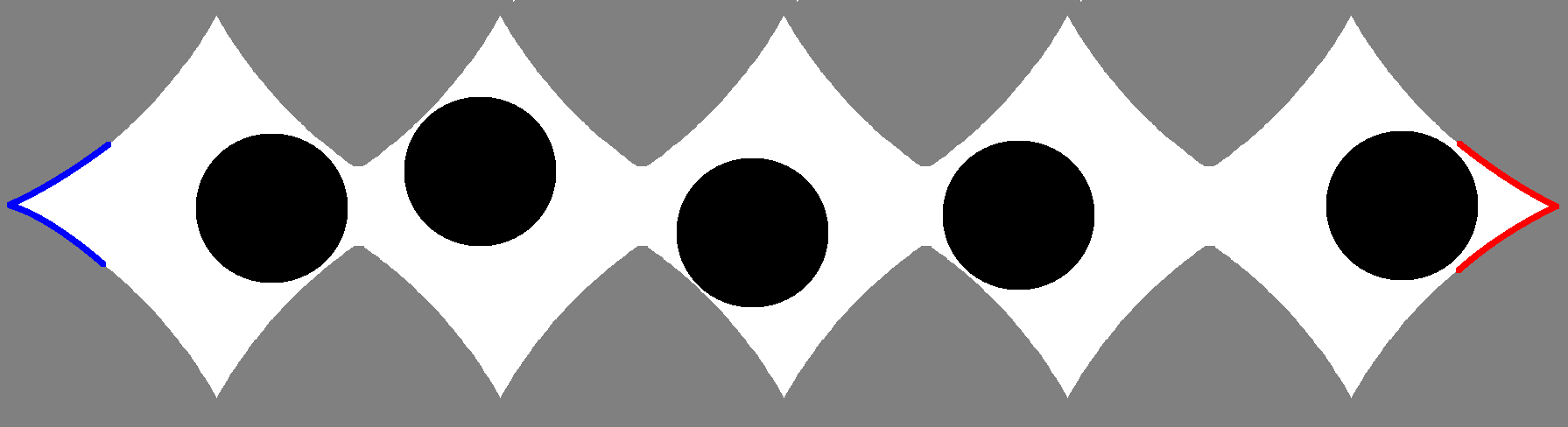}}
\caption{A billiards-like microscopic heat conduction model. Each particle
  is trapped in its own cell. Collisions through the opening between
  twe cells are allowed. Red and
  blue sections of the boundary is thermalized in a why that the
  particle receives a random kick after colliding with these two
  sections. The random kick mimics a collision with a particle drawn
  from a Boltzmann distribution. }
\label{table}
\end{figure}

The slow speed of convergence of the model in \cite{li2016polynomial} is due to
the presence of low energy 
particle. Because of the localization, the next energy
exchange will not happen in a long time period if the kinetic energy
of one of the involved particle is low. The slow particle has to move to the ``gate'' by
itself in order to exchange energy with others. As a result, the energy transport is temporarily
blocked by this low energy particle. The stochastic energy exchange model
inherits this feature from the original deterministic heat conduction
model. If a site carries a very low amount of energy, it will wait a
long time for the next energy exchange. We call this the {\it low
  energy site effect}. Since the energy transport is
occasionally halted by low energy sites, one natural question is that:
Would the low energy site effect in a stochastic energy 
exchange model qualitatively changes the thermal conductivity?

A more fundamental question is about the existence of the local thermodynamic
equilibrium (LTE). The existence of LTE means that the marginal
distribution of the nonequilibrium steady state with respect to finite
local
sites converges to a thermal equilibrium when the 
length of the chain approaches to infinity. Heuristically, this
implies the existence of a well-defined local temperature. There are
very limited rigorous results about the existence of LTE due to its
significant difficulty \cite{kipnis1982heat, ravishankar2007local, li2016local}, all of which are for very simple
heat conduction models. It is also tempting to check, whether the low
energy site effect would make the stochastic energy exchange model fail to achieve LTE. 

Different from the thermal equilibrium, NESS usually does
not have an explicit form. We are able to prove its existence, uniqueness,
ergodicity, and hydrodynamic limits in some situations. But in general
a detailed description of NESS is not possible. In fact, this is one
reason why any rigorous justification of nonequilibrium statistical
physics is challenging. Since mathematical studies to the
thermal conductivity and LTE are too difficult, we will have to
seek help from numerical simulations. 

The main subject of this paper is to answer the two questions raised above
numerically. In a stochastic energy exchange model, we choose two
different rate functions corresponding to exponential
and polynomial ergodicity, respectively. The we use law of large
numbers of martingale difference sequences to show that the thermal
conductivity is both well defined and computable through Monte Carlo
simulations. And the marginal distribution of the NESS is obviously
well defined and computable because of the ergodicity. Hence it is
not difficult to design a series of numerical simulations to compute
the thermal conductivity and the marginal distribution. Our
simulations are implemented by the Hashing-Leaping Method (HLM) developed in
\cite{li2015fast}, which is significantly faster than most
implementation methods of the stochastic
simulation algorithm (SSA). Parallel computing is used to collect enough samples.

Our numerical simulations shows that the low energy site effect
will not qualitatively affect the thermal conductance, which is
supposed to be proportional to the reciprocal of the length of the
chain. This implies the existence of a ``normal'' thermal
conductivity. The thermal conductivity of the model with slow speed of
convergence can be increased by changing to 2D. Then the effect of
low energy site is significantly reduced. The existence of LTE is a
more subtle issue. To check it, one needs to accurately compute the marginal distribution of the NESS. However, the slow
convergence speed to NESS caused by the low energy effect imposes many
challenges to such computation. After working carefully on the
sampling technique and the algorithm, we conclude that LTE is achieved
in our model regardless affected by the low energy effect or not.

The paper is organized in the following way. The stochastic energy
exchange model, its connection to deterministic dynamical system, and
relevant rigorous results are introduced in Section 2. Section 3 is
about the law of large number and numerical results of the thermal
conductivity. The existence of LTE is investigated in Section
4. Section 5 is the conclusion.

\section{Model Description}
\subsection{Reduction from deterministic dynamics}
Consider an 1D chain of billiard tables as described in Figure \ref{table}, called
the locally confined particle system. One
disk-shaped particle is ``trapped'' in a billiard table such that each
particle is allowed to collide with those particles in adjacent
billiard tables but can not leave its billiard table. In addition, we assume that the boundary of each
billiard table is piecewise $C^{3}$ and strictly convex inward, so that it
forms a chaotic dynamical billiards system by itself \cite{chernov2006chaotic}. This model is
intensively studied because this is probably the simplest
deterministic dynamical system that models the microscopic heat
conduction. The kinetic energy is transported through collisions
between particles.

Due to the significant difficulty of studying a chaotic multibody
system, a natural question is that whether one can reduce this
deterministic dynamical system to a Markov process. More precisely, we
look for a stochastic energy exchange process that only keep track of
the time evolution of the energy profile. Obviously the process of
energy evolution is not Markovian. But since a chaotic billiards
system has very good statistical properties, we expect this
deterministic energy evolution process to be well approximated by a
Markov process, at least under some rescaling limit. 

There has been two different studies about the reduction from the
billiard system in Figure \ref{table} to a Markov
process. One study was conducted by \cite{gaspard2008heat, gaspard2008heat2}, which essentially assumes
that the gap between two tables is extremely small. Then we can take a
time rescaling limit such that the expected number of
particle-particle collisions per unit time is still $1$. The
conclusion of this study is that at this time rescaling limit, the
probability that two particles with energy $(E_{1}, E_{2})$ collides
during the next time interval with length $\mathrm{d}t \ll 1$ is
approximately $\sqrt{E_{1} + E_{2}} \mathrm{d}t$. Now assume the
energy exchange process is Markov. Then the interval between two
consecutive energy exchanges should be an exponentially distributed
random variable, whose rate is $\sim \sqrt{E_{1} + E_{2}}$. We refer readers to
\cite{gaspard2008heat2} for the precise formula of the energy exchange kernel.

The other point of view, however, focuses on the dynamics at the {\it
  original} time scale. If the billiards table is properly chosen, the
time distribution of the next particle-particle collision is very
close to an exponential distribution. Instead of taking the time rescaling limit,
one can numerically probe the slope of the exponential tail of the
first collision time. Additional simulations in \cite{li2015chaos} demonstrate that the
conditional distribution of the time duration between two consecutive
collisions have the same exponential tail. Therefore, the energy
exchange times of the billiards model can be approximated by a
Poisson clock. The rate of this clock, or the slope of the exponential
tail, is called the {\it stochastic energy
  exchange rate}. When two adjacent particles have energies
$(E_{1}, E_{2})$, the numerical simulation in \cite{li2015chaos} shows that
the slope of this exponental tail is $\sim \sqrt{ \min \{E_{1}, E_{2}\}}$. In
other words, the rate of the exponential clock about the energy
exchange event should be $\sim \sqrt{ \min \{E_{1}, E_{2}\}}$. This
rate respects the dynamics of the billiards system at its original
time scale. It is easy to see that a slow particle needs a long time
to move to the ``gate area'' in order to have a collision, which causes
the low energy site effect. Hence the
next collision time mainly depends on the lower particle energy in a
nearest neighbor pair particles. We refer \cite{li2015chaos} for further
discussions about this clock rate. 

It remains to discuss the rule of energy redistribution at a
collision. The analysis and numerical simulation in \cite{li2018billiards} shows
that although the explicit formula of an energy redistribution is
too complicated to be useful, the amount of
exchanged energy has positive density everywhere. Hence it is proper
to assume that the energy repartition is done in a ``random halves''
way as described in equation \eqref{update}. More precisely, we assume that the energies of two colliding
particles are pooled together at first. Then a (uniformly distributed) random
proportion of the total energy goes to the left, and the rest energy
goes to the right. This simplified rule has been used in many early
studies \cite{grigo2012mixing, kipnis1982heat, sasada2015spectral,
  li2013existence}.

\subsection{Stochastic energy exchange process}

In summary, the locally confined particle system in Figure \ref{table}
can be reduced to the
following two stochastic energy processes with two different rate
functions. Each process corresponds to one approach of model
reduction. Consider a chain of $N$ sites
carrying energy $E_{1}, \cdots, E_{N}$ respectively. An exponential clock is
associated to each pair of sites $(E_{i}, E_{i+1})$. The rate of this
clock is $R(E_{i}, E_{i+1})$. When the clock rings, an energy exchange
event occurs immediately. The rule of energy exchange is that
\begin{equation}
  \label{update}
(E'_{i}, E'_{i+1}) = (p(E_{i} + E_{i+1}), (1-p)(E_{i} + E_{i+1}) )\,,
\end{equation}
where $p$ is a uniform random variable on $(0, 1)$. We assume the rate function $R(E_{i}, E_{i+1})$ has two different
choices $R = R_{1}(E_{i}, E_{i+1}) = \sqrt{E_{i} + E_{i+1}}$ and $R =
R_{2}(E_{i}, E_{i+1}) = \sqrt{E_{i}E_{i+1}/(E_{i} + E_{i+1})}$,
corresponding to the dynamics at the time rescaling limit and the
original time scale, respectively. Note that here we choose $R_{2}(E_{i},
E_{i+1}) = \sqrt{E_{i}E_{i+1}/(E_{i} + E_{i+1})}$ because it is a smooth function that
mimics the shape of $\sqrt{\min \{ E_{i}, E_{i+1}\}}$, and it admits an
explicit thermal equilibrium.

The rule of energy exchange with the boundary is the same. We assume
this chain is connected to two heat baths with temperatures $T_{L}$
and $T_{R}$ respectively. Two more exponential clocks with rates
$R(T_{L}, E_{1})$ and $R(E_{N}, T_{R})$ are associated to two ends of
the chain respectively. When the left (resp. right) clock rings, the
first (resp. last) site updates energy according to the following rule
\begin{equation}
\label{updatebc}
  E_{1}' = p(E_{1} + \mathcal{E}(T_{L})) \quad (\mbox{resp. } E_{N}' =
  p(E_{N} + \mathcal{E}(T_{R})  ))\,,
\end{equation}
where $p$ is a uniform random variable on $(0, 1)$, and
$\mathcal{E}(\lambda)$ mean an exponential random variable with mean
$\lambda$. 

The stochastic energy exchange process described above generates a
Markov process $\Phi_{t}$ on $\mathbb{R}^{N}_{+}$. Let $\mathbf{E} = (E_{1},
\cdots, E_{N})\in
\mathbb{R}^{N}_{+}$ be a state of the Markov process and
$f(\mathbf{E})$ be a measurable function. To distinguish the two rate
functions, we denote the Markov process by $\Phi^{1}_{t}$ if the rate
function is $R_{1}$ and by $\Phi^{2}_{t}$ if the rate function is
$R_{2}$. The upper index is dropped when it does not lead to a confusion.

The infinitesimal generator $\mathcal{L}_{i}$ of $\Phi^{i}_{t}$ for $i = 1, 2$ is 
\begin{align}
\label{generator}
\mathcal{L}_{i} f(\mathbf{E}) & =  \sum_{n = 1}^{N-1}R_{i}(E_{n},
                                    E_{n+1})[\int_{0}^{1}f(E_{1}, \cdots, p
                                    (E_{n} + E_{n+1}), (1-p)(E_{n} +
                                    E_{n+1}), \\\nonumber
&\cdots, E_{N} )\mathrm{d}p  - f(\mathbf{E})  ]\\\nonumber
& + R_{i}(T_{L}, E_{1}) [\int_{0}^{\infty}\int_{0}^{p} f( p(E_{1} +
   x), E_{2}, \cdots, E_{N}) \frac{1}{T_{L}}e^{-x/T_{L}} \mathrm{d}x
   \mathrm{d}p - f(\mathbf{E})] \\\nonumber
& + R_{i}(E_{N}, T_{R}) [\int_{0}^{\infty}\int_{0}^{p} f( E_{1},
   \cdots, E_{N - 1}, p(E_{N} + x)) \frac{1}{T_{R}}e^{-x/T_{R}} \mathrm{d}x
   \mathrm{d}p - f(\mathbf{E})] \,.\nonumber
\end{align}

\subsection{Rigorous results for the stochastic energy exchange process}

Let $V(\mathbf{E})$ be a strictly positive function. For any
signed measure $\mu$ on $\mathbb{R}^{N}_{+}$, denote
\begin{equation}
\label{norm}
  \|\mu\|_{V} = \int_{\mathbb{R}^{N}_{+}} V(\mathbf{E})|\mu|(
  \mathrm{d} \mathbf{E})
\end{equation}
by the $V$-weighted total variation norm and $\|\mu\|_{TV}$ by the
total variation norm. Further, let $L_{V}( \mathbb{R}^{N}_{+})$ be the
collection of $V$-integrable probability measures.

We have the following results for $\Phi^{1}_{t}$ (from \cite{li2013existence})
and $\Phi^{2}_{t}$ (from \cite{li2016polynomial}). 

\begin{thm}
\label{exp}
$\Phi^{1}_{t}$ admits a unique invariant probability measure that is
absolutely continuous with respect to the Lebesgue measure. In
addition, there exist constants $c > 0$ and $ \rho \in (0, 1)$ such that
$$
  \|P^{t}( \mathbf{E}, \cdot ) - \pi \|_{V} \leq c V( \mathbf{E}) \rho^{t}
$$
for every $\mathbf{E} \in \mathbb{R}^{N}_{+}$, where $V( \mathbf{E}) =
1 + \sum_{i = 1}^{N} E_{i}$.
\end{thm}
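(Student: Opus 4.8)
The plan is to establish geometric ergodicity via the standard Meyn–Tweedie / Harris-type machinery for continuous-time Markov processes, checking (i) a Foster–Lyapunov drift condition with the linear function $V(\mathbf{E}) = 1 + \sum_{i=1}^N E_i$, and (ii) a minorization condition on sublevel sets of $V$. First I would compute $\mathcal{L}_1 V$ explicitly. Because the redistribution rule \eqref{update} conserves $E_n + E_{n+1}$ at an internal exchange, the interior terms of $\mathcal{L}_1 V$ vanish identically; only the two boundary terms contribute. At the left boundary, writing the post-jump energy as $p(E_1 + x)$ with $p \sim \mathrm{Unif}(0,1)$ and $x \sim \mathcal{E}(T_L)$, one gets an expected change $\tfrac12(E_1 + T_L) - E_1 = \tfrac12(T_L - E_1)$, multiplied by the rate $R_1(T_L, E_1) = \sqrt{T_L + E_1}$; similarly on the right. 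Hence
\begin{equation*}
\mathcal{L}_1 V(\mathbf{E}) = \tfrac12 \sqrt{T_L + E_1}\,(T_L - E_1) + \tfrac12 \sqrt{E_N + T_R}\,(T_R - E_N).
\end{equation*}
This is bounded above, and it tends to $-\infty$ like $-\tfrac12 E_1^{3/2} - \tfrac12 E_N^{3/2}$ as $E_1$ or $E_N \to \infty$. The subtlety is that $\mathcal{L}_1 V$ does not become negative when only the \emph{interior} energies are large while $E_1, E_N$ stay bounded, so $V$ alone does not give a drift condition of the form $\mathcal{L}_1 V \le -cV + b\mathbf{1}_C$ with $C$ a $V$-sublevel set. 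This is exactly the obstacle addressed in \cite{li2013existence}, and I would handle it by the same device: either invoke the result of \cite{li2013existence} verbatim, or, if a self-contained argument is wanted, combine the drift of $V$ at the boundary sites with the fact that internal exchanges rapidly equilibrate the bulk, using a modified Lyapunov function or a two-step (resolvent) drift argument that propagates the contraction inward from the two ends.

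Next I would verify the minorization / small-set condition. The process is a piecewise-deterministic Markov process whose only randomness is in the jumps, but each jump kernel has an explicit density: an internal exchange spreads $E_n + E_{n+1}$ uniformly, and each boundary exchange involves a uniform times an exponential, so the transition kernel has a density that is strictly positive on an open set. Starting from any point in a compact $V$-sublevel set $C$, a fixed finite sequence of jumps — say, left-boundary exchange, then $E_1 \leftrightarrow E_2$, then $E_2 \leftrightarrow E_3$, and so on up the chain, then analogously from the right — has positive probability of occurring within a fixed time window and, by the absolute continuity of each step, produces a law that dominates $\epsilon\,\nu(\cdot)$ for some nontrivial measure $\nu$ absolutely continuous with respect to Lebesgue measure, uniformly over $C$. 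Continuity of the rates $R_1$ and of the jump densities in the starting state gives the uniformity in $\mathbf{E} \in C$; compactness of $C$ makes the time window uniform. This yields both irreducibility (hence uniqueness of the invariant measure) and aperiodicity, and the absolute continuity of $\pi$ follows from the absolute continuity of $\nu$.

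Finally, with the drift condition toward the small set $C$ and the minorization on $C$ in hand, the continuous-time Harris theorem (e.g. Down–Meyn–Tweedie, or the Hairer–Mattingly formulation) yields the existence of constants $c > 0$ and $\rho \in (0,1)$ with $\|P^t(\mathbf{E}, \cdot) - \pi\|_V \le c\,V(\mathbf{E})\,\rho^t$, which is precisely the claimed bound. I expect the main obstacle to be the first point: the linear $V$ gives contraction only from the two thermostatted ends, and one must argue — as in \cite{li2013existence} — that this end-driven dissipation, together with the conservative but strongly mixing interior dynamics, suffices to control the whole state; everything else (computing $\mathcal{L}_1 V$, checking the small set, quoting Harris) is routine. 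Since Theorem~\ref{exp} is attributed to \cite{li2013existence}, the cleanest route in the paper is simply to cite that reference for the drift estimate and sketch the minorization for completeness.
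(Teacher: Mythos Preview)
The paper does not prove Theorem~\ref{exp} at all: it is stated as a known result imported from \cite{li2013existence}, with no argument given beyond that citation. Your proposal correctly anticipates this in its final sentence, and the sketch you give --- Foster--Lyapunov drift for $V(\mathbf{E}) = 1 + \sum_i E_i$, minorization via a finite jump sequence, then the continuous-time Harris theorem --- is indeed the architecture of the proof in \cite{li2013existence}. You also correctly isolate the genuine difficulty: the linear $V$ only feels dissipation at the two boundary sites, so a naive drift inequality $\mathcal{L}_1 V \le -cV + b\mathbf{1}_C$ fails, and one must supplement it (as that reference does) to propagate contraction into the bulk. Since the paper's own ``proof'' is simply the citation, your concluding recommendation --- cite \cite{li2013existence} --- matches the paper exactly; the preceding sketch is extra but accurate.
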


\begin{thm}
\label{poly}
Assume further that there exists a constant $K \gg T_{L}, T_{R}$ such
that $R(E_{i}, E_{i+1}) = \min\{K, \sqrt{\min\{E_{i},
  E_{i+1}}\}$. Then $\Phi_{t}$ admits a unique invariant probability measure that is
absolutely continuous with respect to the Lebesgue measure. In
addition, for any $\gamma > 0$, there exists $\eta > 0$ such that for
any $\mu \in L_{V_{\eta}}( \mathbb{R}^{N}_{+})$, 
$$
  \lim_{t \rightarrow \infty} t^{1 - \gamma} \| \mu P^{t} - \pi
  \|_{TV} = 0 \,,
$$
where
\begin{equation}
\label{lyapunov}
  V_{\eta} = \sum_{i = 1}^{N} E_{i} + \sum_{m = 1}^{N} \sum_{i = 1}^{N
  - m + 1}(\sum_{j = 0}^{m - 1} E_{i+j})^{a_{m} \eta - 1} \,,
\end{equation}
and $a_{m} = 1 - (2^{m-1} - 1)/(2^{N} - 1)$ for $m = 1, \cdots, N$. 
\end{thm}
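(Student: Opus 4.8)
The plan is to deduce Theorem~\ref{poly} from the subgeometric Foster--Lyapunov framework for Markov processes (Meyn--Tweedie; Douc--Fort--Guillin; Douc--Fort--Moulines--Soulier; Hairer), which reduces the statement to three ingredients: that $\Phi_{t}$ is $\psi$-irreducible and aperiodic, that every compact subset of $\mathbb{R}^{N}_{+}$ is petite, and that a drift inequality
\[
  \mathcal{L} V_{\eta}(\mathbf{E}) \;\le\; -\,c\,\varphi\bigl(V_{\eta}(\mathbf{E})\bigr) + b\,\mathbf{1}_{C}(\mathbf{E})
\]
holds for a compact set $C$, constants $b,c>0$, and a concave increasing $\varphi$ with $\varphi(v)/v\to 0$. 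Granting these, the abstract theorems yield a unique invariant probability measure $\pi$ together with $\|\mu P^{t}-\pi\|_{TV}\le C(\mu)\,r(t)$ for $\mu\in L_{V_{\eta}}(\mathbb{R}^{N}_{+})$, where $r(t)$ is, up to constants, $1/\bigl(\varphi\circ H_{\varphi}^{-1}\bigr)(t)$ with $H_{\varphi}(v)=\int_{1}^{v}\mathrm{d}s/\varphi(s)$. Choosing $\varphi(v)=v^{\alpha}$ gives $r(t)\asymp t^{-\alpha/(1-\alpha)}$, so it suffices to arrange that the exponent $\alpha=\alpha(\eta)$ delivered by the drift condition satisfies $\alpha/(1-\alpha)>1-\gamma$ for the prescribed $\gamma$; this pins down the admissible values of $\eta$.

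The first two ingredients follow from a controllability argument. Each bulk update \eqref{update} pools $E_{n}+E_{n+1}$ and splits it by a \emph{uniform} proportion, and each boundary update \eqref{updatebc} first adds an independent $\mathcal{E}(T_{L})$- or $\mathcal{E}(T_{R})$-distributed amount of energy; composing an $N$-bounded number of such jumps, one checks that for any fixed $t_{0}>0$ the transition kernel $P^{t_{0}}(\mathbf{E},\cdot)$ has a density that is strictly positive on all of $\mathbb{R}^{N}_{+}$ and continuous in $\mathbf{E}$. This simultaneously gives $\psi$-irreducibility with $\psi$ the Lebesgue measure, absolute continuity of any invariant measure, aperiodicity (the holding times are exponential), and a uniform minorization $P^{t_{0}}(\mathbf{E},\cdot)\ge\varepsilon\,\nu(\cdot)$ over $\mathbf{E}$ in a compact set --- so compact sets are small, hence petite. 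Non-explosion, needed for \eqref{generator} to govern the dynamics, is guaranteed by the cap $R=\min\{K,\sqrt{\min\{E_{i},E_{i+1}\}}\}$ which bounds every clock rate by $K$; since $K\gg T_{L},T_{R}$ this cap is irrelevant for the near-equilibrium behaviour.

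The real work --- and the step I expect to be the main obstacle --- is to construct $V_{\eta}$ as in \eqref{lyapunov} and verify the drift inequality, which is exactly where the low energy site effect forces $\varphi$ to be sublinear. The mechanism: whenever a block of consecutive sites $E_{i},\dots,E_{i+m-1}$ all carry little energy, every clock inside or adjacent to the block rings only at rate of order $\sqrt{\,\text{block sum}\,}$, so the block relaxes slowly, and unjamming it proceeds through a cascade of roughly $m$ energy-halving exchanges that push energy from an endpoint into the interior; this is what makes the relaxation scale like a fixed power of $2^{m}$, the source of the weights $a_{m}=1-(2^{m-1}-1)/(2^{N}-1)$. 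Concretely one applies $\mathcal{L}$ to \eqref{lyapunov} term by term. The linear part $\sum_{i}E_{i}$ has uniformly bounded drift since bulk exchanges conserve total energy and only the two boundary clocks inject or remove energy, at bounded rates. For a negative-power term $\bigl(\sum_{j=0}^{m-1}E_{i+j}\bigr)^{a_{m}\eta-1}$ one estimates the gain $\int_{0}^{1}(\cdots)^{a_{m}\eta-1}\,\mathrm{d}p$ from an exchange that reshuffles energy within the block or across its boundary, and multiplies by the rate $\sim\sqrt{\,\text{block sum}\,}$; the product behaves like $(\text{block sum})^{a_{m}\eta-1/2}$ and feeds into the size-$(m+1)$ block term, while the ``leaving'' term $-R\cdot(\text{block sum})^{a_{m}\eta-1}$ behaves like $-(\text{block sum})^{a_{m}\eta-1/2}$. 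The doubling recursion defining $a_{m}$ is tailored so that, for $\eta$ small enough, the negative contribution of the size-$m$ term dominates the positive contribution it induces at level $m+1$; the hierarchy closes at $m=N$, where the block is the whole chain and the missing energy is supplied by the baths.

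The bookkeeping here is heavy: all $O(N^{2})$ terms of $V_{\eta}$ interact, $\eta$ must be small enough that every exponent $a_{m}\eta-1$ is negative and all cross terms are subdominant, and one has to check that outside a large compact set the sum of all contributions is bounded by $-c\,V_{\eta}^{\alpha(\eta)}$ with $\alpha(\eta)$ approaching its supremum (so that $r(t)$ approaches, but never reaches, $t^{-1}$) as $\eta\downarrow 0$. Assembling this drift inequality with the petiteness of compact sets and invoking the subgeometric ergodic theorem then completes the proof, with $\eta$ chosen so that $\alpha(\eta)/(1-\alpha(\eta))>1-\gamma$.
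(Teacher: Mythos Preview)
The paper does not prove Theorem~\ref{poly}; it is quoted from \cite{li2016polynomial}, as stated in the sentence introducing Theorems~\ref{exp} and~\ref{poly}. The only commentary the present paper adds is the remark that the capped rate $\min\{K,\sqrt{\min\{E_i,E_{i+1}\}}\}$ is adopted for technical reasons and that the argument in \cite{li2016polynomial} ``still contains 35 pages technical calculation.'' So there is no in-paper proof to compare against.

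That said, your sketch is the correct blueprint and, judging from the explicit form of $V_\eta$ in \eqref{lyapunov} and the paper's remark, it matches the strategy of \cite{li2016polynomial}: Lebesgue-irreducibility and petiteness of compacts via a controllability argument exploiting the uniform split \eqref{update} and the exponential boundary injections \eqref{updatebc}; non-explosion from the cap $K$ on all clock rates; and a subgeometric Foster--Lyapunov drift $\mathcal{L}V_\eta\le -c\,\varphi(V_\eta)+b\mathbf{1}_C$ fed into the Douc--Fort--Guillin/Hairer machinery to obtain a polynomial rate. You also correctly identify the drift estimate as the crux and give the right mechanism for the weights $a_m$: an $m$-block at low energy relaxes through a cascade of exchanges from its boundary inward, producing a dyadic hierarchy that the recursion $a_m=1-(2^{m-1}-1)/(2^N-1)$ is designed to close. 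What your sketch cannot supply at this length is the actual chain of inequalities showing that, for $\eta$ small, the negative contribution from each block-scale term dominates the positive contribution it induces at the next scale, uniformly over all $O(N^2)$ terms and all low-energy configurations; this is exactly the 35-page computation the paper alludes to, and without it the argument remains a (correct) plan rather than a proof.
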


We remark that a slightly different rate function $R(E_{i}, E_{i+1}) = \min\{K, \sqrt{\min\{E_{i},
  E_{i+1}}\}$ is used in Theorem \ref{poly} for technical reasons in
order to make a rigorous proof possible in \cite{li2016polynomial}. It
has the same scaling as $R_{2}$ in low energy configurations but makes the proof
much simpler (which still contains $35$ pages technical
calculation). We expect the speed of convergence to the invariant
probability measure of
$\Phi^{2}_{t}$ to be the same as described in Theorem \ref{poly}. In
other words, the ergodicity of $\Phi^{1}_{t}$ and $\Phi^{2}_{t}$ are qualitatively different. The speed of convergence to the steady
state is exponential for $\Phi^{1}_{t}$ but polynomial for
$\Phi^{2}_{t}$.

\section{Comparison of thermal conductivity}
As discussed in the previous section, two rate functions generate two
Markov processes $\Phi^{1}_{t}$ and $\Phi^{2}_{t}$ with very different
asymptotic properties. $\Phi^{2}_{t}$ has a much slower speed of convergence to
its invariant probability measure due to the low energy site effect, which is inherited
from the deterministic billiard model. As a result, after an energy
exchange event of $\Phi^{2}_{t}$, if a site
gets a very low amount of energy, the energy transport will be blocked
for a while until this low energy site ``recovers'' by itself. One natural question is that: how much would the
 low energy site effect affect macroscopic thermodynamic properties?
Would it cause an ``abnormal'' thermal
conductivity that depends on the system size? In this section, we will address this
issue numerically.

\subsection{Thermal conductivity for 1D model}
Let $\pi$ be the invariant probability measure of $\Phi_{t}$. The {\it
  thermal conductivity} of the stochastic energy exchange model is defined as
\begin{align}
\label{kappa1}
  \kappa &=  \frac{1}{T_{R} - T_{L}}\int \left \{ (\sum_{ i = 1}^{N-1}
  R(E_{i}, E_{i+1})\int_{0}^{1}p(E_{i} + E_{i+1}) \mathrm{d}p - E_{i} ) \right .\\\nonumber
&+ R(T_{L}, E_{1})(E_{1} -
  \int_{0}^{\infty} \int_{0}^{1}p(E_{1} + x)e^{-x/T_{L}} \mathrm{d}p
  \mathrm{d}x )\\\nonumber
& \left .+ R(E_{N}, T_{R})(  \int_{0}^{\infty} \int_{0}^{1}p(E_{N} + x)e^{-x/T_{R}} \mathrm{d}p
  \mathrm{d}x - E_{N})\right \}
  \pi( \mathrm{d} \mathbf{E}) \,.
\end{align}
Equation \eqref{kappa1} reflects the ratio of the energy flux to
the temperature gradient within an
infinitesimal amount of time when starting from $\pi$. We claim that
$\kappa$ is a computable quantity, i.e., the law of large
numbers can be applied to $\kappa$. The {\it thermal
  conductance}, denoted by $\mathbf{q}$, is the ratio of $\kappa$ to
the system size, i.e., $\mathbf{q} = \kappa/(N+1)$.

Let $t_{0}< t_{1} < t_{2} <\cdots$ be the time at which an energy
exchange occurs. Let $J_{i}$ be the energy flux from right to left
associated to the energy exchange event occurring at time $t_{i}$. If
the energy exchange event is between site $k$ and site $k+1$, we have
$J_{i} = E_{k}(t_{i}^{+}) - E_{k}(t_{i})$. If the energy exchange is
between site $1$ (resp. site $N$) and the left (resp. right) boundary,
we have $J_{i} = E_{1}(t_{i}) - E_{1}(t_{i}^{+})$ (resp. $J_{i} =
E_{N}(t_{i}^{+}) - E_{1}(t_{i})$). 

\begin{thm}
\label{lln}
Assume there exists a constant $K \gg T_{L}, T_{R}$ such that $R$ can
not exceed $K$. In other words, $R_{1}$ and $R_{2}$ are modified to
$\min\{K, \sqrt{E_{i} + E_{i+1}}\}$ and $\min\{K,
\sqrt{\frac{E_{i}E_{i+1}}{E_{i} + E_{i+1}}} \}$ respectively. 
Assume further $\pi(| \mathbf{E}|^{2}) < \infty$, then
\begin{equation}
\label{kappa2}
  \kappa = \lim_{T \rightarrow \infty}\frac{1}{T}
  \frac{1}{T_{R} - T_{L}} \sum_{t_{i} < T} J_{i} < \infty \quad a.s.
\end{equation}
\end{thm}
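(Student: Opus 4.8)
The plan is to identify the partial sums $\sum_{t_i < T} J_i$ with an additive functional of the Markov process $\Phi_t$ and decompose it into a martingale part plus a compensator (predictable) part, then apply the strong law of large numbers for martingales together with an ergodic argument for the compensator. Concretely, for the stopped and modified process (with rates capped at $K$), the process has bounded jump rates, so between any two jumps the waiting time is an exponential random variable whose rate is bounded above and, because the rate function $R$ is bounded, the number of jumps up to time $T$ grows linearly in $T$. Writing $S_T := \sum_{t_i < T} J_i$, the natural object is $M_T := S_T - A_T$, where $A_T := \int_0^T \mathcal{L} g(\Phi_s)\,\mathrm{d}s$ is the Dynkin compensator of the flux, i.e. $g$ is chosen so that $\mathcal{L}g(\mathbf{E})$ equals exactly the instantaneous expected net flux appearing in the integrand of \eqref{kappa1}. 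Then $M_T$ is a (local) martingale with respect to the natural filtration, and by the Dynkin/Itô formula for pure-jump Markov processes $A_T$ is precisely the time integral of the expression whose $\pi$-average is $(T_R - T_L)\kappa$.

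\textbf{Step 1: Moment bounds.} First I would establish that $\pi(|\mathbf{E}|^2) < \infty$ (assumed in the hypothesis) is enough to control the second moments of the individual jumps $J_i$ uniformly along the trajectory. Since an energy exchange pools $E_k + E_{k+1}$ and redistributes, $|J_i| \le E_k + E_{k+1}$ at an interior exchange, and similarly $|J_i| \le E_1 + \mathcal{E}(T_L)$ at the boundary; so $\mathbb{E}[J_i^2 \mid \mathcal{F}_{t_i}]$ is bounded by a quadratic function of the current energies, which is $\pi$-integrable. Combined with the Lyapunov/drift structure from Theorems \ref{exp} and \ref{poly} (which guarantee that $\pi$ has the needed moments and that time averages of polynomially growing observables converge), this gives the integrability needed to run the martingale SLLN. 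The bounded-rate modification is essential here: it guarantees $\mathcal{L}g$ is well defined as an $L^1(\pi)$ function and that the number of jumps in $[0,T]$ has finite expectation growing like $T$.

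\textbf{Step 2: Martingale LLN and ergodic averaging.} With $M_T = S_T - A_T$ a martingale whose predictable quadratic variation $\langle M \rangle_T$ grows at most linearly in $T$ (by Step 1, since each jump contributes a $\pi$-integrable conditional second moment and jumps arrive at a bounded rate), the strong law of large numbers for martingales gives $M_T / T \to 0$ a.s. Meanwhile, $A_T / T = \frac{1}{T}\int_0^T \mathcal{L}g(\Phi_s)\,\mathrm{d}s \to \pi(\mathcal{L}g) = (T_R - T_L)\kappa$ a.s., by the ergodic theorem applied to the unique invariant measure $\pi$ (ergodicity and uniqueness of $\pi$ come from Theorems \ref{exp}/\ref{poly}; one checks $\mathcal{L}g \in L^1(\pi)$ using the moment bound). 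Dividing by $T(T_R - T_L)$ and taking $T \to \infty$ yields \eqref{kappa2}, and $\kappa < \infty$ follows since $\pi(\mathcal{L}g)$ is the integral of a $\pi$-integrable function.

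\textbf{The main obstacle} I anticipate is controlling the compensator $A_T$ and the quadratic variation $\langle M\rangle_T$ in the polynomially-ergodic case $\Phi^2_t$: here $\pi$ need only have the moments guaranteed by the weak Lyapunov function $V_\eta$ in \eqref{lyapunov}, so one must verify that $\mathcal{L}g$ and the conditional second moments of the jumps genuinely lie in $L^1(\pi)$, and that the ergodic theorem (rather than merely a rate of convergence in total variation) applies — this requires a Birkhoff-type argument valid under polynomial mixing, which is where the hypothesis $\pi(|\mathbf{E}|^2) < \infty$ and the rate cap $K$ do the real work. A secondary technical point is making the informal "Dynkin's formula for $g$" rigorous: one must exhibit $g$ (a sum of the energies, suitably weighted) in the domain of $\mathcal{L}$ and justify that the local martingale $M_T$ is a true martingale, which again follows from the moment/rate bounds of Step 1.
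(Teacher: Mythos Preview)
Your strategy is correct and matches the paper's in spirit: both split the cumulative flux into a martingale piece (handled by the martingale SLLN via second-moment bounds coming from the rate cap $K$ and $\pi(|\mathbf{E}|^{2})<\infty$) and an ergodic-average piece (handled by the ergodic theorem for the unique invariant measure). The paper implements this in discrete time: it passes to the time-$h$ skeleton $\Phi_{n}=\Phi_{nh}$, writes the flux over $[nh,(n+1)h)$ as $Y_{n}=Z_{n}+Y'_{n}$ with $Z_{n}=\mathbb{E}_{\Phi_{n}}[Y_{n}]$, shows $\{Y'_{n}\}$ is a martingale difference sequence with $\sum_{n}\mathbb{E}[|Y'_{n}|^{2}]/n^{2}<\infty$, applies the discrete martingale SLLN, uses the Markov-chain ergodic theorem for $\frac{1}{m}\sum Z_{n}$, and finally lets $h\to 0$ to identify the limit with $\kappa$. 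Your continuous-time compensator route avoids the $h$-discretization and the closing $h\to 0$ step, which is a bit cleaner; the paper's version is more elementary in that it only needs the discrete martingale-difference LLN and never touches the generator domain.

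One point to tighten in your write-up: you do not need to exhibit a function $g$ in the domain of $\mathcal{L}$ with $\mathcal{L}g$ equal to the flux rate. The cumulative flux $S_{T}$ is not of the form $g(\Phi_{T})-g(\Phi_{0})$; it is an additive functional of the jump process, and its compensator is directly $A_{T}=\int_{0}^{T} f(\Phi_{s})\,\mathrm{d}s$ with $f$ the instantaneous expected flux rate (the integrand of \eqref{kappa1}). No Poisson equation needs solving, and the ``secondary technical point'' you flag about putting $g$ in the domain of $\mathcal{L}$ disappears. With that correction your Steps 1--2 go through exactly as stated.
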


\medskip
We remark that two ``assumptions'' in Theorem \ref{lln} are actually
provable with extra work. Since the theme of the present paper is
about numerical computations, we simply assume these properties to avoid further
distractions. A closer look to the proofs in paper
\cite{li2013existence} and \cite{li2016polynomial} reveals that
$\pi(\| \mathbf{E}\|_{1}^{2}) $ is finite for both $\Phi^{1}_{t}$ and
$\Phi^{2}_{t}$. And the assumption of an upper bound $K$ can
be removed by using the estimation of expected energy gain introduced
in Proposition 5.1 of \cite{li2013existence}.

\medskip

\begin{proof}
Let $0 < h \ll 1$ be a time step. Let $\Phi_{n} := \Phi_{hn}$ be the
time-$h$ sample chain of $\Phi_{t}$. Let $Y_{n}$ be the total energy
flux during the time period $[nh, (n+1)h)$, i.e.,
\begin{equation}
\label{eq1}
  Y_{n} = \frac{1}{h}\frac{1}{T_{R} - T_{L}}\sum_{nh \leq t_{i} <
    (n+1)h} J_{i} \,.
\end{equation}
Then we have
\begin{equation}
\label{eq2}
  \lim_{T \rightarrow \infty}\frac{1}{T}
  \frac{1}{T_{R} - T_{L}} \sum_{t_{i} < T} J_{i} = \lim_{m \rightarrow
  \infty } \frac{1}{m}\sum_{n
    = 1}^{m}Y_{n} \,.
\end{equation}
Hence it is sufficient to prove the law of large numbers for $Y_{n}$. 

Let $Z_{n} = \mathbb{E}_{\Phi_{n}}[Y_{n}]$. It is easy to see that
$Z_{n}$ is an observable of $\Phi_{n}$. Now let $\mathcal{F}_{n}$ be
the $\sigma$-field generated by $\Phi_{0}, \cdots, \Phi_{n}$. Let
$Y'_{n} = Y_{n} - Z_{n}$. It is easy to see that 
$$
  \mathbb{E}[Y'_{n} \,|\, \mathcal{F}_{n}] = 0 \,.
$$
Hence $Y'_{n}$ is a martingale difference sequence with respect to
$\{\mathcal{F}_{n}\}_{n \geq 0}$. It is well known that if 
\begin{equation}
  \label{2ndmoment}
  \sum_{n = 1}^{\infty} \frac{\mathbb{E}[|Y'_{n}|^{2}]}{n^{2}} <
  \infty \,,
\end{equation}
we have
$$
  \lim_{m \rightarrow \infty}\frac{1}{m}\sum_{n = 1}^{m} Y'_{n} = 0 \,.
$$
(This is the law of large numbers for martingales, see for example
Theorem 3.3.1 of \cite{fazekas2001general}.)
In addition, we have
\begin{equation}
\label{eq3}
  \mathbb{E}[|Y'_{n}|^{2}] \leq \mathbb{E}[|Y_{n}|^{2}] \leq
  (\frac{1}{h}\frac{1}{T_{R} - T_{L}})^{2}
\mathbb{E}[ \mathbb{E}_{\Phi_{n}}[  \sum_{nh \leq t_{i} < (n+1)h} \|
  \Phi_{t_{i}} + X_{i}\|^{2}_{1}] ]\,, 
\end{equation}
where $X_{i}$ are i.i.d. random variables with law $\max\{\mathcal{E}(T_{L}),
\mathcal{E}(T_{R})\}$. This is because $J_{i}$ can not exceed the sum
of total energy and the energy coming from the boundary. In addition,
assume $t_{i}, t_{i+1}, \cdots t_{i+m}$ are the first $m+1$ energy exchange times right after $nh$, then the
update of the total energy is bounded by $\|\Phi_{t_{i+m}^{+}}\|_{1}
\leq \|\Phi_{n} \|_{1} + X_{i} + \cdots + X_{m}$. 

Since the clock rate is bounded by $K$ from above, it is easy to see
that
\begin{equation}
\label{eq4}
  \mathbb{E}_{\Phi_{n}}[  \sum_{nh \leq t_{i} < (n+1)h} \|
  \Phi_{t_{i}} \|^{2}_{1}] \leq \mathbb{E}[\sum_{i = 1}^{\mathbf{N}}(
  \|\Phi_{n}\|_{1} + 2 X_{i} )^{2}] \,,
\end{equation}
where $\mathbf{N}$ is a Poisson random variable with mean $Kh$. Then
some straightforward calculation shows that there exist an $h_{0} >
0$, such that for any $h$ we have
\begin{equation}
\label{eq5}
  \mathbb{E}[\sum_{i = 1}^{\mathbf{N}}(
  \|\Phi_{n}\|_{1} + 2 X_{i} )^{2}] \leq C_{0} \max\{
  \|\Phi_{n}\|_{1}^{2}, 1\}\,,
\end{equation}
where the constant $C_{0}$ only depends on $h_{0}$. 

By the law of large number of Markov process, we have
\begin{equation}
\label{eq6}
  \mathbb{E}[(\max\{ \| \Phi_{n}\|_{1}, 1\} )^{2}] \rightarrow
  \pi((\max\{ \| E \|_{1}, 1\} )^{2}) < \pi(\| E \|_{1}^{2} ) + 1 < \infty \,.
\end{equation}
Then by equations \eqref{eq3}-\eqref{eq6}, we have
\begin{equation}
\label{eq7}
  \sum_{n = 1}^{\infty} \frac{\mathbb{E}[|Y'_{n}|^{2}]}{n^{2}} < \infty \,.
\end{equation}
Hence the condition in equation \eqref{2ndmoment} is satisfied, and
the law of large numbers of $Y'_{n}$ holds. We have
$$
  \lim_{m \rightarrow \infty} \sum_{n = 1}^{m} Y'_{n} = 0 \,.
$$

In addition, by the
law of large numbers of Markov process, we have
\begin{equation}
\label{eq71}
  \frac{1}{m}\sum_{n = 1}^{m} Z_{n} \rightarrow
  \frac{1}{h}\mathbb{E}_{\pi}[\sum_{t_{i} < h} J_{i}] 
\end{equation}
almost surely. And the right hand side of equation \eqref{eq71} is
finite because clock rates can not exceed $K < \infty$. 
Therefore, we have
\begin{equation}
  \label{eq8}
  \lim_{T \rightarrow \infty}\frac{1}{T}
  \frac{1}{T_{R} - T_{L}} \sum_{t_{i} < T} J_{i} = \lim_{m \rightarrow
  \infty} \frac{1}{m}\sum_{n = 1}^{m}Y_{n} = \frac{1}{m}\sum_{n =
  1}^{m} Z_{n} + \lim_{m \rightarrow \infty} \sum_{n = 1}^{m} Y'_{n} = 
  \frac{1}{h}\mathbb{E}_{\pi}[\sum_{t_{i} < h} J_{i}] < \infty
\end{equation}
almost surely.

By the invariance of $\pi$, the quantity
$\frac{1}{h}\mathbb{E}_{\pi}[\sum_{t_{i} < h} J_{i}]$ is independent
of $h$. Since $h$ can be arbitrarily small, by equation \eqref{eq8}, we have
\begin{equation}
  \label{eq9}
   \lim_{T \rightarrow \infty}\frac{1}{T}
  \frac{1}{T_{R} - T_{L}} \sum_{t_{i} < T} J_{i} =
  \lim_{h \rightarrow 0} \frac{1}{h}\mathbb{E}_{\pi}[\sum_{t_{i} < h} J_{i}] = \kappa \,,
\end{equation}
where $\kappa$ is the infinitesimal flux defined in equation \eqref{kappa1}.
\end{proof}

Then we compute thermal conductivities of $\Phi^{1}_{t}$ and
$\Phi^{2}_{t}$ by simulating $\kappa$. 

\medskip

{\bf Numerical Simulation 1:} 
Let $T_{L} = 1$, $T_{R} = 2$. According to Theorem \ref{lln}, we can
compute the thermal conductance
$$
 \mathbf{q} = \lim_{T \rightarrow \infty}\frac{1}{T}\frac{1}{N+1}
  \frac{1}{T_{R} - T_{L}} \sum_{t_{i} < T} J_{i}
$$
over a long trajectory. In our simulation $T$ is chosen to be $2
\times 10^{6}$. We compute the thermal conductance $\mathbf{q}$ for $N = 6, 8, 10, \cdots,
100$. The simulation results for $\Phi^{1}_{t}$ and $\Phi^{2}_{t}$ are
presented in Figure \ref{fig1}.

\begin{figure}[h]
\centerline{\includegraphics[width = \linewidth]{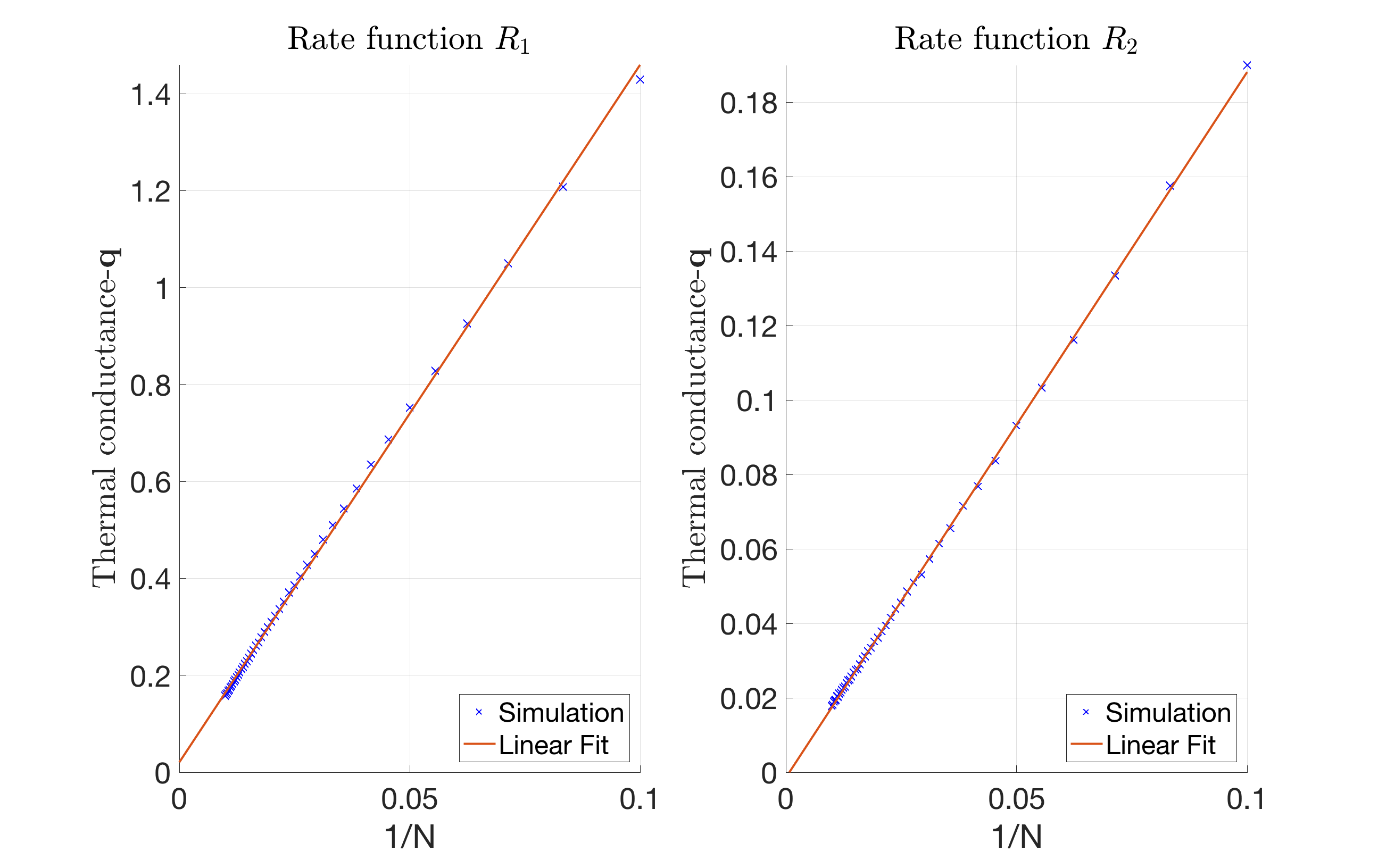}}
\caption{Thermal conductance for $\Phi^{1}_{t}$ and
  $\Phi^{2}_{t}$. Red line is the linear fit of $\kappa$ versus
  $1/N$. Left: $R_{1} = \sqrt{E_{1} + E_{2}}$. Right: $R_{2} =
  \sqrt{E_{1}E_{2}/(E_{1} + E_{2})}$.}
\label{fig1}
\end{figure}

According to the plot given above, we see that $\mathbf{q}$
is proportional to $1/N$ for both cases, although the thermal
conductance of $\Phi^{2}_{t}$ is much lower. In other words, in spite
of a much slower ergodicity and the presence of the  low energy site effect, $\Phi^{2}_{t}$ still gives a``normal''
thermal conductivity that is independent of the system size. The
effect of low energy site will 
quantitatively reduce the thermal conductivity, but not qualitatively
change the scaling of the thermal conductivity. In contrast, note that
many harmonic chains and anharmonic chains admit ``abnormal'' thermal
conductivities. We refer the review article \cite{lepri2003thermal} for
a summary of these results.

\subsection{Thermal conductivity of 2D model}

The thermal conductivity of a 2D stochastic energy exchange model is
also interesting. Obviously the low thermal conductivity of
$\Phi^{2}_{t}$ is mainly contributed by the occurence of low energy sites. The
occasional occurrence of a low energy site can block the energy transport for a
long time, and significantly reduce the thermal conductivity. This
problem can be alleviated by increasing the dimension of the
system. Instead of an 1D chain, we consider a 2D array of energy
sites. The upper and lower edges are adiabatic, while the left and
right edges connects to the heat bath. 

More precisely, we consider an $M\times N$ array of sites. An
exponential clock with rate $R = R_{1}$ or $R_{2}$ is associated to
each pair of nearest neighbor sites. When the clock rings, the rule of
energy redistribution is same as described in equation \eqref{update}. In addition,
sites with indices $(i,1)$ (resp. $(i,N)$) for $i = 1, \cdots, M$ are
connected to the left (resp. right) heat bath. The rule of energy
exchange with heat bath is same as in equation \eqref{updatebc}. 

The thermal conductivity $\kappa$ can then be defined and computed
analogously. We have
\begin{align}
\label{kappa2d}
  \kappa &= \frac{1}{M} \frac{1}{T_{R} - T_{L}}\int \left \{
             (\sum_{i = 1}^{M}\sum_{ j = 1}^{N-1}
  R(E_{i, j}, E_{i, j+1})\int_{0}^{1}p(E_{i, j} + E_{i, j+1})
             \mathrm{d}p - E_{i, j} ) \right .\\\nonumber
&+ \sum_{i = 1}^{M}R(T_{L}, E_{i,1})(E_{i,1} -
  \int_{0}^{\infty} \int_{0}^{1}p(E_{i,1} + x)e^{-x/T_{L}} \mathrm{d}p
  \mathrm{d}x )\\\nonumber
& \left .+ \sum_{i = 1}^{M}R(E_{i,N}, T_{R})(  \int_{0}^{\infty} \int_{0}^{1}p(E_{i,N} + x)e^{-x/T_{R}} \mathrm{d}p
  \mathrm{d}x - E_{i,N})\right \}
  \pi( \mathrm{d} \mathbf{E}) \,.
\end{align}
And again, we denote $\mathbf{q} = \kappa/(N+1)$ as the thermal conductance.

Similar as in the previous subsection, $\kappa$ is a computable
quantity. Let $t_{0}< t_{1} < t_{2} <\cdots$ be the time at which an
``horizontal'' energy
exchange, i.e., energy exchange between $E_{i,j}$ and $E_{i, j \pm 1}$
(or heat bath) occurs. Let $J_{k}$ be the energy flux from right to left
associated to the energy exchange event occurring at time $t_{k}$. If
the energy exchange event is between site $(i,j)$ and site $(i,j+1)$, we have
$J_{k} = E_{i,j}(t_{k}^{+}) - E_{i,j}(t_{k})$. If the energy exchange is
between site $1$ (resp. site $N$) and the left (resp. right) boundary,
we have $J_{k} = E_{i,1}(t_{k}) - E_{i,1}(t_{k}^{+})$ (resp. $J_{i} =
E_{i,N}(t_{k}^{+}) - E_{i,N}(t_{k})$).

A similar approach as in Theorem \ref{lln} implies
\begin{equation}
  \label{kappa2d2}
  \kappa = \lim_{T \rightarrow \infty} \frac{1}{M}\frac{1}{T_{R}
    - T_{L}} \sum_{t_{k} < T} J_{k} \,.
\end{equation}

{\bf Numerical Simulation 2:} Let $T_{L} = 1$, $T_{R} = 2$, $N = 50$,
and $M = 1, 2, \cdots, 20$. By equation \eqref{kappa2d2}, we can
simulate the thermal conductance $\mathbf{q}$ by computing 
$$
 \lim_{T\rightarrow \infty} \frac{1}{M}\frac{1}{T_{R}
    - T_{L}} \sum_{t_{k} < T} J_{k}
$$
over a long trajectory. In our simulation $T$ is chosen to be $1
\times 10^{7}$. Then we compare $\mathbf{q}$ for each $M$ from $1$ to
$20$. Simulation results of $\mathbf{q}$ vs. $M$ for $\Phi^{1}_{t}$ and $\Phi^{2}_{t}$ are presented in Figure \ref{fig2}.

\begin{figure}[h]
\centerline{\includegraphics[width = \linewidth]{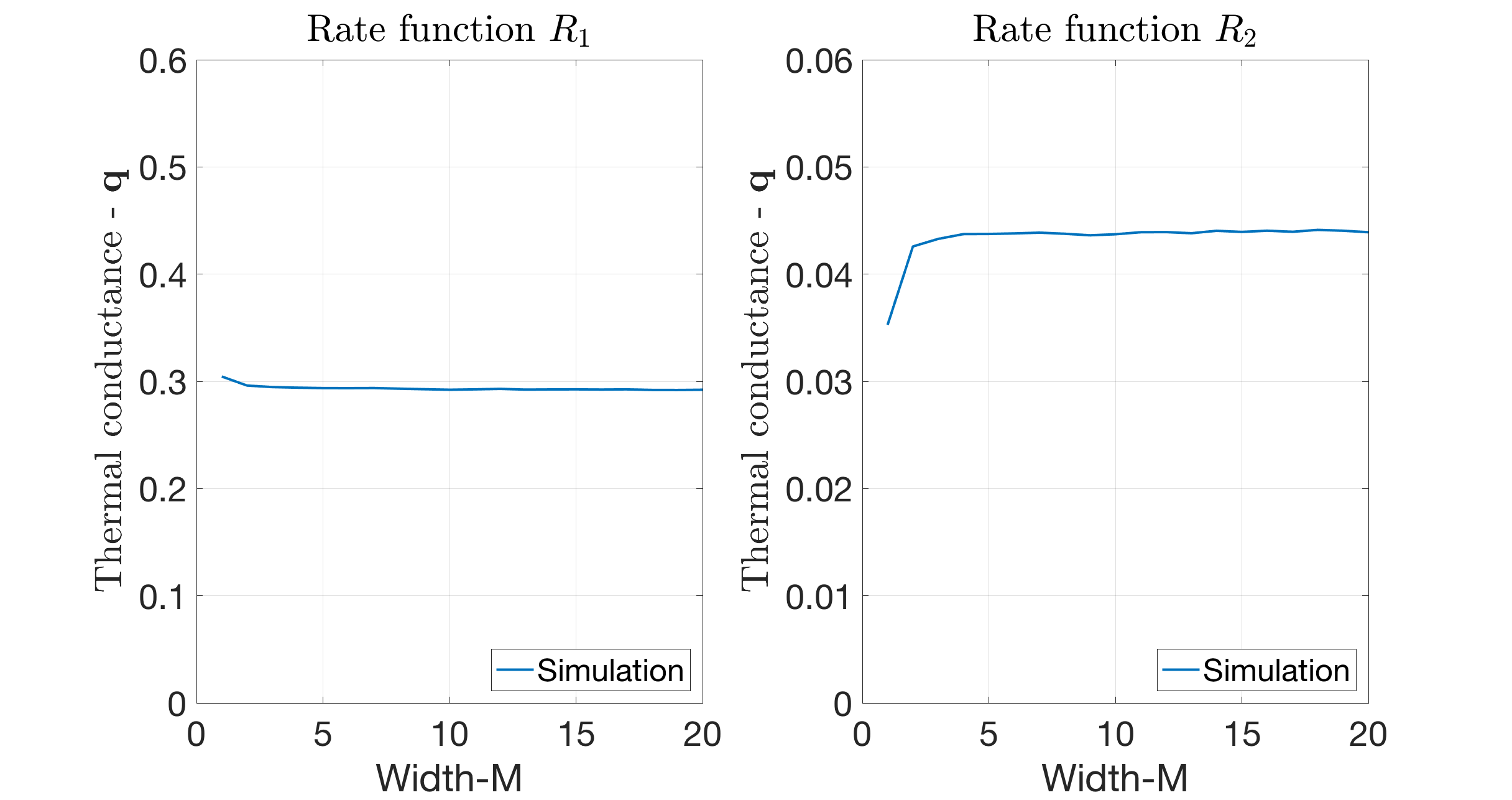}}
\caption{Thermal conductance $\mathbf{q}$ vs. $M$ of the 2D system for $\Phi^{1}_{t}$ and
  $\Phi^{2}_{t}$. Length of the chain is fixed as $N = 50$. $M$
  varies from $1$ to $20$. Left: $R_{1} = \sqrt{E_{1} + E_{2}}$. Right: $R_{2} =
  \sqrt{E_{1}E_{2}/(E_{1} + E_{2})}$.}
\label{fig2}
\end{figure}

Figure \ref{fig2} confirms our speculation. With rate function $R_{1}
= \sqrt{E_{1} + E_{2}}$, the thermal conductivity changes
inconspicuously even the width of the system increases. With rate function $R_{2} =
  \sqrt{E_{1}E_{2}/(E_{1} + E_{2})}$, $\mathbf{q}$ (as well as $\kappa$) increases significantly
  when $M$ changes from $1$ to $2$, and keeps increasing with
  increasing $M$. This demonstrates the dimension
  effect. When a site loses most of its energy in an energy exchange
  and becomes ``silent'' for a while, the energy transport is
  completely blocked in an 1D model. With an extra dimension, the
  energy can still be transported by circumventing the ``silent''
  site. In addition, the probability that the energy transport is
  completed blocked becomes much lower in a 2D model.

\section{Comparison of local thermodynamic equilibrium(LTE)}
The local thermodynamical equilibrium (LTE) assumption means that
although the entire system is nonequilibrium, the marginal distribution of the steady state with
respect to a ``local'' subset is still close to a thermal
equilibrium. The existence of LTE is equivalent to the existence of a
well-defined local temperature. In the study of microscopic heat
conduction models, the existence of LTE usually means the marginal
distribution of NESS with respect to finite many local sites converges
to a thermal equilibrium as the length of the chain goes to infinity. We refer 
\cite{li2016local, ravishankar2007local} for further discussion and
known rigorous results about the existence of LTE.

\subsection{Nonequilibrium steady state under the LTE assumption}

The two rate functions in Section 2 are chosen in a way that the theoretical
thermal equilibrium can be explicitly given. We start this subsection with the
following Proposition.

\begin{prop}
\label{infinite}
Assume the chain is infinitely long on both sides. The process $\Phi^{1}_{t}$
(resp. $\Phi^{2}_{t}$) admits a family of invariant probability
measures 
\begin{equation}
  \label{invinfty}
  \pi = \prod_{i = -\infty}^{\infty} \pi_{i}^{T} \quad T 
> 0 \,,
\end{equation}
where $\pi_{i}^{T}$ are i.i.d. exponential random variables
(resp. Gamma random variables) with mean $T$ (resp. parameters $1/2$ and
$T$). 
\end{prop}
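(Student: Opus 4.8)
The plan is to reduce the infinite-volume statement to a single-bond computation and then check that computation by one change of variables. Since the chain is bi-infinite there are no boundary terms, so the relevant generator is the bulk sum $\mathcal{L} f(\mathbf{E}) = \sum_{n\in\mathbb{Z}} R(E_n,E_{n+1})\big[\int_0^1 f(\theta_n^p\mathbf{E})\,\mathrm{d}p - f(\mathbf{E})\big]$, where $\theta_n^p$ replaces $(E_n,E_{n+1})$ by $(p(E_n+E_{n+1}),(1-p)(E_n+E_{n+1}))$ and fixes the other coordinates. It suffices to test invariance against local (cylinder) functions $f$, and for such an $f$ only finitely many bonds contribute. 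For a fixed bond $n$, because $\theta_n^p$ changes only $E_n,E_{n+1}$, because $R$ depends only on $E_n,E_{n+1}$, and because $\pi=\prod_i\pi_i^T$ is a product, one integrates out all other coordinates and collapses the $n$-th term to $\int \mathcal{L}^{(2)}\bar f\, \mathrm{d}(\pi_n^T\otimes\pi_{n+1}^T)$, where $\bar f$ is the partial average of $f$ over the remaining sites and $\mathcal{L}^{(2)}$ is the two-site generator. Hence $\int\mathcal{L} f\,\mathrm{d}\pi=0$ for all such $f$ as soon as $\pi^T\otimes\pi^T$ is $\mathcal{L}^{(2)}$-invariant.

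Next I would recast two-site invariance in a convenient form. Write $\mathcal{L}^{(2)}g(E_1,E_2) = R(E_1,E_2)\big(Kg(E_1,E_2)-g(E_1,E_2)\big)$, where $Kg(E_1,E_2)=\int_0^1 g(pS,(1-p)S)\,\mathrm{d}p$ and $S=E_1+E_2$. Then $\int\mathcal{L}^{(2)}g\,\mathrm{d}(\pi^T\otimes\pi^T)=\int Kg\,\mathrm{d}\tilde\nu-\int g\,\mathrm{d}\tilde\nu$, where $\tilde\nu$ is the rate-weighted positive measure $\mathrm{d}\tilde\nu = R\,\mathrm{d}(\pi^T\otimes\pi^T)$. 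So two-site invariance is equivalent to $K^*\tilde\nu=\tilde\nu$. Passing to coordinates $(S,p)$ via $E_1=pS$, $E_2=(1-p)S$ (Jacobian $S$), the kernel $K$ keeps $S$ deterministic and resamples $p$ uniformly on $(0,1)$; hence $K^*\tilde\nu=\tilde\nu$ holds precisely when the density of $\tilde\nu$ in the variables $(S,p)$ is independent of $p$.

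Finally I would verify exactly this. For $\Phi^1_t$ take $\pi^T_i=\mathrm{Exp}(T)$: then $\pi^T\otimes\pi^T$ has density $T^{-2}e^{-S/T}$ against $\mathrm{d}E_1\mathrm{d}E_2$, i.e. $T^{-2}S\,e^{-S/T}$ against $\mathrm{d}S\,\mathrm{d}p$, so multiplying by $R_1=\sqrt{S}$ gives $\mathrm{d}\tilde\nu = T^{-2}S^{3/2}e^{-S/T}\,\mathrm{d}S\,\mathrm{d}p$, which does not involve $p$. For $\Phi^2_t$ take $\pi^T_i=\Gamma(1/2,T)$ with density proportional to $x^{-1/2}e^{-x/T}$: then $\pi^T\otimes\pi^T$ has density proportional to $(E_1E_2)^{-1/2}e^{-S/T}$, which in $(S,p)$ (after the Jacobian) becomes proportional to $(p(1-p))^{-1/2}e^{-S/T}$, and $R_2=\sqrt{E_1E_2/(E_1+E_2)}=\sqrt{p(1-p)S}$, so $\mathrm{d}\tilde\nu$ is proportional to $S^{1/2}e^{-S/T}\,\mathrm{d}S\,\mathrm{d}p$, again independent of $p$. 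In both cases the criterion of the previous paragraph holds, giving two-site invariance and hence, by the first step, invariance of $\pi=\prod_i\pi_i^T$ on the infinite chain for every $T>0$.

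I expect the only slightly delicate point to be the first step: specifying a core of local test functions on which invariance must be checked, and observing that the unbounded rates ($R_1$, and $R_2$ without the cutoff of Theorem \ref{poly}) cause no difficulty here because a cylinder function feels only finitely many bonds, so no summability issue arises. The algebraic content is the short change of variables above, and it is there that the specific form of $\mathrm{Exp}(T)$ versus $\Gamma(1/2,T)$ is forced by the requirement that $R\cdot(\pi^T\otimes\pi^T)$ be flat in $p$.
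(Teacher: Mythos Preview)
Your proof is correct and, at its core, follows the same strategy as the paper: reduce to a single bond and verify two-site invariance directly. The paper carries this out as a ``rate out equals rate in'' computation at a fixed state $(E_i,E_{i+1})$, integrating over all pre-collision configurations $(x,E_i+E_{i+1}-x)$ and checking that the resulting density matches $R\cdot(\pi^T\otimes\pi^T)$ evaluated at $(E_i,E_{i+1})$.

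Your route packages the same algebra more cleanly: you pass to $(S,p)$ coordinates, observe that the collision kernel $K$ fixes $S$ and resamples $p$ uniformly, and hence reduce invariance of the rate-weighted measure $\tilde\nu=R\,\mathrm{d}(\pi^T\otimes\pi^T)$ to the single criterion that its $(S,p)$-density be flat in $p$. This makes the verification for both rate functions a one-line computation, and it also explains \emph{why} the exponential and $\Gamma(1/2,\cdot)$ distributions are singled out: they are exactly the densities for which the $p$-dependence introduced by the Jacobian and the single-site densities is cancelled by the $p$-dependence of $R_1$ and $R_2$ respectively. The paper's direct integration reaches the same conclusion but leaves this structural reason implicit.
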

\begin{proof}
Since there is no energy exchange with the boundary, it is sufficient
to check the interaction between $E_{i}$ and $E_{i+1}$. 

When starting from the probability distribution $\pi$ given in the theorem, the probability that
$\Phi^{1}_{t}$ leaves $(E_{i}, E_{i+1})$ on the next infinitesimal
time interval $(0, \mathrm{d}t) $ is
$$
  \sqrt{E_{i} + E_{i+1}} e^{-E_{i}/T} e^{-E_{i+1}/T} \mathrm{d}t \,.
$$
On the other hand, when starting from $\pi$, the probability density
that $\Phi^{1}_{t}$ enters an infinitesimal neighborhood of $(E_{i},
E_{i+1})$ on the same time interval is
\begin{align}
\label{phi1}
& \mathrm{d}t \cdot\int_{0}^{E_{i} + E_{i+1}} \frac{1}{E_{i} + E_{i+1}} \sqrt{x +
    (E_{i} + E_{i+1} - x)} e^{-x/T} e^{-(E_{i} + E_{i+1} - x)/T}
  \mathrm{d}x  \\\nonumber
=& \sqrt{E_{i} + E_{i+1}} e^{-E_{i}/T} e^{-E_{i+1}/T} \mathrm{d}t\,.
\end{align}
Therefore, $\pi$ is invariant for $\Phi^{1}_{t}$ if $\pi_{i}$ are
i.i.d. exponential distributions. 

The case of $\Phi^{2}_{t}$ is the same. The probability that
$\Phi^{2}_{t}$ leaves $(E_{i}, E_{i+1})$
is
\begin{align}
\label{phi2}
  &\sqrt{\frac{E_{i}E_{i+1}}{E_{i} + E_{i+1}}}\cdot
  \frac{1}{\sqrt{T} \Gamma(1/2)}\frac{1}{\sqrt{E_{i}}}e^{-E_{i}/T} \cdot \frac{1}{\sqrt{T} \Gamma(1/2)}\frac{1}{\sqrt{E_{i+1}}}
  e^{-E_{i+1}/T} \mathrm{d}t \\\nonumber
=&\frac{1}{T \pi} \frac{1}{\sqrt{E_{i} +
      E_{i+1}}} e^{-(E_{i} + E_{i+1})/T} \mathrm{d}t \,. 
\end{align}
The probability density that $\Phi^{2}_{t}$ enters an infinitesimal neighborhood of $(E_{i},
E_{i+1})$ is
\begin{align}
\label{eq10}
&  \mathrm{d}t \cdot\int_{0}^{E_{i} + E_{i+1}} \frac{1}{E_{i} +
   E_{i+1}} \sqrt{\frac{x(E_{i} + E_{i+1} - x)}{x +
    (E_{i} + E_{i+1} - x)}} \cdot \frac{1}{\sqrt{T}
   \Gamma(1/2)}\frac{1}{\sqrt{x}} e^{-x/T}\\\nonumber
& \cdot \frac{1}{\sqrt{T}
   \Gamma(1/2)}\frac{1}{\sqrt{E_{i} + E_{i+1} - x}} e^{-(E_{i} + E_{i+1} - x)/T}
  \mathrm{d}x  \\\nonumber
=& \frac{1}{T \pi} \frac{1}{\sqrt{E_{i} +
      E_{i+1}}} e^{-(E_{i} + E_{i+1})/T} \mathrm{d}t \,.
\end{align}
Therefore, $\pi$ is invariant for $\Phi^{2}_{t}$ if $\pi_{i}$ are
i.i.d. Gamma distributions. This completes the proof.
\end{proof}

This theoretical thermal equilibrium does not work well for a
finite chain due to boundary effects. However, we are curious about whether the
marginal distribution of the NESS with respect to finite local sites
converges to i.i.d. exponential (or Gamma) distributions when $N
\rightarrow \infty$. If the answer is yes, then the LTE is
established. Note that here we adopt a strict definition of the
LTE. We see that LTE is achieved only if the marginal distribution of the NESS with
respect to {\it many} local sites converges to the thermal equilibrium
described before. There are also literatures about weaker versions of
LTE \cite{mejia2001coupled}, i.e., the marginal distribution with respect to {\it one} site or
one point. 

We plan to use the energy profile and the thermal conductivity to
preliminarily check whether LTE is achieved. It is not difficult to calculate the
theoretical energy flux $J_{i, i+1}$ if we know the joint distribution of $(E_{i},
E_{i+1})$. This theoretical flux can be then used to compute a
theoretical energy profile. Then we can compare the theoretical energy
profile and its empirical counterpart (which is relatively easy to compute).

The following two propositions follow from simple calculations. 

\begin{prop}
\label{flux1}
If the joint marginal distribution of the invariant probability
measure of $\Phi^{1}_{t}$ with respect to site $E_{i}$ and $E_{i+1}$
is the product measure of two exponential distributions with mean $T$
and $\hat{T}$ respectively, then the mean energy flux from site $i+1$ to
site $i$ is
\begin{equation}
  \label{fluxphi1}
  J_{i,i+1} = \frac{\sqrt{\pi} (3T^{2} + 9 T^{3/2} \hat{T}^{1/2} + 11 T
    \hat{T} + 9 T^{1/2} \hat{T}^{3/2} +  3\hat{T}^{2})}{8 (T^{1/2} +
    \hat{T}^{1/2})^{3}} (\hat{T} - T)\,.
\end{equation}
\end{prop}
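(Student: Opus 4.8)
The plan is to read the mean flux off the energy-exchange rule and the assumed product marginal, and then to evaluate a single two-dimensional integral. When the Poisson clock of the bond $(E_i,E_{i+1})$ fires---which for $\Phi^1_t$ happens at rate $R_1(E_i,E_{i+1})=\sqrt{E_i+E_{i+1}}$---site $i$ is updated from $E_i$ to $p(E_i+E_{i+1})$ with $p\sim\mathrm{Unif}(0,1)$, so the energy carried from $i+1$ to $i$ in that event is $p(E_i+E_{i+1})-E_i$, whose $p$-average is $\tfrac12(E_{i+1}-E_i)$. Hence, under the hypothesis that $(E_i,E_{i+1})$ is a product of exponentials with means $T$ and $\hat T$, the mean flux is the rate times the mean jump, integrated against that law:
\[
  J_{i,i+1}=\tfrac12\,\mathbb{E}\!\left[\sqrt{E_i+E_{i+1}}\,(E_{i+1}-E_i)\right]
  =\frac{1}{2T\hat T}\int_0^\infty\!\int_0^\infty\sqrt{x+y}\,(y-x)\,e^{-x/T}\,e^{-y/\hat T}\,\mathrm{d}x\,\mathrm{d}y .
\]
The integrand is a polynomial times a decaying exponential, so the integral converges absolutely and the steps below are justified.

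To evaluate it I would pass to ``total energy'' coordinates $x=su$, $y=s(1-u)$, with $s>0$, $u\in(0,1)$ and Jacobian $s$. Then $\sqrt{x+y}=\sqrt s$, $y-x=s(1-2u)$, and $x/T+y/\hat T=s\,c(u)$ with $c(u)=\hat T^{-1}+(T^{-1}-\hat T^{-1})u$, so the $s$-integral is the Gamma integral $\int_0^\infty s^{5/2}e^{-c(u)s}\,\mathrm{d}s=\Gamma(7/2)\,c(u)^{-7/2}=\tfrac{15\sqrt\pi}{8}\,c(u)^{-7/2}$; this is the origin of the $\sqrt\pi$ in the final formula. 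What is left is $\tfrac{15\sqrt\pi}{16\,T\hat T}\int_0^1(1-2u)\,c(u)^{-7/2}\,\mathrm{d}u$, and changing variables from $u$ to $c$ (so that $1-2u$ becomes an affine function of $c$) reduces it to the elementary integrals $\int_{\hat T^{-1}}^{T^{-1}}c^{-7/2}\,\mathrm{d}c$ and $\int_{\hat T^{-1}}^{T^{-1}}c^{-5/2}\,\mathrm{d}c$.

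What remains is algebraic bookkeeping, and this is where essentially all the work lies. Evaluating the two integrals yields a rational expression in $T^{1/2}$, $\hat T^{1/2}$ that naively appears singular at $T=\hat T$; but $J_{i,i+1}$ is an analytic function of $(T,\hat T)$ which must vanish on the diagonal (no net current at equilibrium), so the apparent singularity cancels. The simplification is cleanest in the variables $\alpha=T^{-1/2}$, $\beta=\hat T^{-1/2}$: after clearing denominators, the numerator vanishes at $\alpha=\beta$ to order three, while the only factor of $(\alpha-\beta)$ in the denominator comes squared, from $(T^{-1}-\hat T^{-1})^2=(\alpha-\beta)^2(\alpha+\beta)^2$; the single surviving $(\alpha-\beta)$, together with the monomial prefactors, reassembles into $\hat T-T$, and the remaining cofactor simplifies to $3\alpha^4+9\alpha^3\beta+11\alpha^2\beta^2+9\alpha\beta^3+3\beta^4$ over $(\alpha+\beta)^2(\alpha\beta)^3$. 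Translating back to $T$ and $\hat T$ gives precisely the expression in \eqref{fluxphi1}. The two things to watch are exactly this vanishing-order count --- which also serves as the fastest check against an algebra slip --- and the overall sign, which must make $J_{i,i+1}>0$ exactly when $\hat T>T$, i.e.\ heat runs from the hotter to the colder site.
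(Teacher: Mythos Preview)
Your proof is correct and follows essentially the same route as the paper: both start from the integral $J_{i,i+1}=\frac{1}{2T\hat T}\int_0^\infty\!\int_0^\infty\sqrt{x+y}\,(y-x)\,e^{-x/T-y/\hat T}\,\mathrm{d}x\,\mathrm{d}y$ and then change variables so that the total energy $x+y$ is one coordinate. The paper pairs it with $u=y-x$ while you pair it with the fraction $u=x/(x+y)$; these are related by an affine change in the second variable and lead to the same Gamma integral, so the difference is cosmetic, and your explicit bookkeeping of the $\Gamma(7/2)$ factor and the ensuing simplification goes a step further than the paper's ``the rest are straightforward calculation.''
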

\begin{proof}
We have
\begin{equation}
  \label{eq11}
  J_{i, i+1} = \int_{0}^{\infty}\int_{0}^{\infty} \frac{y-x}{2}\cdot
  \sqrt{x + y}
  \cdot\frac{1}{T}e^{-x/T}\cdot\frac{1}{\hat{T}}e^{-y/\hat{T}}
  \mathrm{d}x \mathrm{d}y \,.
\end{equation}
Let $ u = y - x$ and $v = x+y$. The rest are straightforward
calculation about the integral. 
\end{proof}

\begin{prop}
\label{flux2}
If the joint marginal distribution of the invariant probability
measure of $\Phi^{2}_{t}$ with respect to site $E_{i}$ and $E_{i+1}$
is the product measure of two Gamma distributions with parameters
$(1/2, T)$ and $(1/2, \hat{T})$ respectively, then the mean energy flux from site $i+1$ to
site $i$ is
\begin{equation}
  \label{fluxphi2}
  J_{i,i+1} = \frac{T^{1/2} \hat{T}^{1/2}(T + 3T^{1/2} \hat{T}^{1/2} + \hat{T})}{4
    \sqrt{\pi}(T^{1/2} + \hat{T}^{1/2})^{3}}  (\hat{T} - T)\,.
\end{equation}

\end{prop}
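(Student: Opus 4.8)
The plan is to write $J_{i,i+1}$ as the steady-state average, under the assumed product-Gamma law, of the energy transported from site $i+1$ to site $i$ per unit time across the bond $(i,i+1)$, and then to collapse the resulting two-dimensional integral to an elementary one. At an energy exchange between $E_i=x$ and $E_{i+1}=y$ the new left energy is $p(x+y)$ with $p$ uniform on $(0,1)$, so the mean flux from right to left at that event is $\int_0^1\big(p(x+y)-x\big)\,\mathrm{d}p=(y-x)/2$. Weighting by the exchange rate $R_2(x,y)=\sqrt{xy/(x+y)}$ and integrating against the product of the two Gamma densities $\tfrac{1}{\sqrt{\pi T}}x^{-1/2}e^{-x/T}$ and $\tfrac{1}{\sqrt{\pi\hat T}}y^{-1/2}e^{-y/\hat T}$ gives
\[
  J_{i,i+1}=\int_0^\infty\!\!\int_0^\infty\frac{y-x}{2}\,\sqrt{\frac{xy}{x+y}}\;\frac{x^{-1/2}e^{-x/T}}{\sqrt{\pi T}}\;\frac{y^{-1/2}e^{-y/\hat T}}{\sqrt{\pi\hat T}}\,\mathrm{d}x\,\mathrm{d}y .
\]
The key simplification, exactly parallel to the computation in Proposition~\ref{flux1}, is that $\sqrt{xy/(x+y)}\cdot x^{-1/2}y^{-1/2}=(x+y)^{-1/2}$, so the integrand reduces to $\tfrac{1}{\pi\sqrt{T\hat T}}\cdot\tfrac{y-x}{2}\,(x+y)^{-1/2}e^{-x/T-y/\hat T}$.

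I would next remove the factor $(x+y)^{-1/2}$ by the identity $(x+y)^{-1/2}=\pi^{-1/2}\int_0^\infty s^{-1/2}e^{-s(x+y)}\,\mathrm{d}s$ and interchange the order of integration (legitimate by Tonelli on the absolute value, since everything decays exponentially). The inner double integral in $x,y$ is then elementary,
\[
  \int_0^\infty\!\!\int_0^\infty\frac{y-x}{2}\,e^{-ax}e^{-by}\,\mathrm{d}x\,\mathrm{d}y=\frac{a-b}{2a^2b^2},\qquad a=s+\tfrac1T,\quad b=s+\tfrac1{\hat T},
\]
and its prefactor $a-b=(\hat T-T)/(T\hat T)$ already exhibits the proportionality to the temperature difference that Fourier's law requires. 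What remains is the single rational integral $\int_0^\infty s^{-1/2}(s+1/T)^{-2}(s+1/\hat T)^{-2}\,\mathrm{d}s$, which under $s=w^2$ becomes $2\int_0^\infty(w^2+1/T)^{-2}(w^2+1/\hat T)^{-2}\,\mathrm{d}w$ and can be evaluated by partial fractions in $w^2$ together with the standard values $\int_0^\infty(w^2+c)^{-1}\mathrm{d}w=\tfrac{\pi}{2}c^{-1/2}$ and $\int_0^\infty(w^2+c)^{-2}\mathrm{d}w=\tfrac{\pi}{4}c^{-3/2}$. (One could instead follow Proposition~\ref{flux1} literally and substitute $u=y-x$, $v=x+y$, but the $v^{-1/2}$ then produces incomplete Gamma / error functions, so the Laplace-transform route above is cleaner.)

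The last step is purely algebraic, and it is the only part I expect to require care. Assembling everything writes $J_{i,i+1}$ as $(\hat T-T)\,(T\hat T)^{-3/2}\pi^{-3/2}$ times $\int_0^\infty(w^2+1/T)^{-2}(w^2+1/\hat T)^{-2}\,\mathrm{d}w$, and the partial-fraction evaluation of the latter has an apparent singularity as $\hat T\to T$. The point is that the bracketed polynomial that appears factors as $(\sqrt{\hat T}-\sqrt T)^2\big(T+3\sqrt{T\hat T}+\hat T\big)$; the $(\sqrt{\hat T}-\sqrt T)^2$ cancels the spurious denominator, and after collecting the remaining powers of $\sqrt T+\sqrt{\hat T}$ one is left with precisely the closed form \eqref{fluxphi2}. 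Spotting that factorization --- equivalently, verifying that the bracket vanishes to second order at $T=\hat T$, so that the flux is genuinely a polynomial in the temperatures times $(\hat T-T)$ --- is the main obstacle; every other step is routine.
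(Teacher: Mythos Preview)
Your argument is correct and yields the stated formula. The starting integral and the key simplification $\sqrt{xy/(x+y)}\cdot x^{-1/2}y^{-1/2}=(x+y)^{-1/2}$ agree with the paper (indeed your set-up is slightly more careful, since the displayed integral in the paper omits the $x^{-1/2}$ and $y^{-1/2}$ factors from the Gamma densities). Where you diverge is in evaluating the resulting integral: the paper simply makes the substitution $u=y-x$, $v=x+y$ and declares the rest ``straightforward,'' whereas you introduce an auxiliary Laplace variable via $(x+y)^{-1/2}=\pi^{-1/2}\int_0^\infty s^{-1/2}e^{-s(x+y)}\,\mathrm{d}s$ and then do a partial-fraction computation in $w=\sqrt{s}$.

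One small correction to your commentary: the $(u,v)$ substitution does \emph{not} actually produce incomplete Gamma or error functions. After integrating $u$ over $[-v,v]$ one gets combinations of $v\cosh(cv)$ and $\sinh(cv)$ with $c=\tfrac12(1/T-1/\hat T)$, and against $v^{-1/2}e^{-dv}$ with $d=\tfrac12(1/T+1/\hat T)$ these split into complete Gamma integrals $\int_0^\infty v^{\pm 1/2}e^{-v/T}\,\mathrm{d}v$ and $\int_0^\infty v^{\pm 1/2}e^{-v/\hat T}\,\mathrm{d}v$, since $d\pm c$ equal $1/T$ and $1/\hat T$. So the paper's route is just as elementary as yours, and arguably shorter: no partial fractions, no hidden cancellation of an apparent $(\sqrt{\hat T}-\sqrt T)^{-2}$ singularity. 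Your Laplace-transform method is perfectly valid and has the pedagogical merit of making the factor $(\hat T-T)$ appear immediately from $a-b$, but the claimed advantage over the direct substitution is overstated.
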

\begin{proof}
We have
\begin{equation}
  \label{eq12}
    J_{i, i+1} = \int_{0}^{\infty}\int_{0}^{\infty} \frac{y-x}{2}\cdot
  \sqrt{\frac{xy}{x + y}}
  \cdot\frac{1}{\sqrt{T}\Gamma(1/2)}e^{-x/T}\cdot\frac{1}{\sqrt{\hat{T}}\Gamma(1/2)}e^{-y/\hat{T}}
  \mathrm{d}x \mathrm{d}y \,.
\end{equation}
Let $ u = y - x$ and $v = x+y$. The rest are straightforward
calculation about the integral. 
\end{proof}

\subsection{Numerical study of the LTE assumption}
We propose the following three numerical simulations to check the validity
of the LTE assumption. Note that the rule of boundary interaction is
different from that in the middle of the chain. Hence marginal
distributions with respect to boundary sites always have boundary
effects, and the system does not reach thermal equilibrium even if
$T_{L} = T_{R}$. We will show that when the length of the chain
increases, the boundary effect gradually disappears. At the
limit, the marginal distribution with respect to finite many sites
that are in the middle of the chain converges to the theoretical
thermal equilibrium given by Proposition \ref{infinite}.

{\bf Numerical Simulation 3.} We first simulate the marginal
distribution with respect to a {\it single} site. Let $T_{L} = 1$,
$T_{R} = 2$, $N = 10, 20, 40, 60, 80$. We simulate processes
$\Phi^{1}_{t}$ and $\Phi^{2}_{t}$ over a long trajectory and collect
the energy profile at sampling times $h, 2h, \cdots, 8\times
10^{6}h$. $h$ is chosen to be $2$ for $\Phi^{1}_{t}$ and $10$ for
$\Phi^{2}_{t}$. The time-$h$ skeleton of a time-continuous
Markov process preserves its invariant probability measure. Hence we can
compute the marginal distribution of the invariant probability measure
with respect to each site. 

Then we compare the sample with respect to each site with a desired Gamma
distribution. This step is done by using the {\it gamfit} function in
MATLAB. Parameters $(\alpha, \beta)$ of the Gamma distribution with
respect to each set are demonstrated in 
Figure \ref{fig3a} and Figure \ref{fig3b}. We can find that the
marginal distribution with respect to a non-boundary site is
approximately a Gamma distribution with parameters $(1, \beta_{i})$
for $\Phi^{1}_{t}$ and $(1/2, \beta_{i})$ for $\Phi^{2}_{t}$, where
$\beta_{i}$ changes with the site index. Note that an exponential
distribution with mean $\lambda$ is a Gamma distribution with
parameters $(1, \lambda^{-1})$. Hence our numerical result is consistent with the
marginal distribution obtained in Proposition \ref{infinite}. 

\begin{figure}[h]
\centerline{\includegraphics[width = \linewidth]{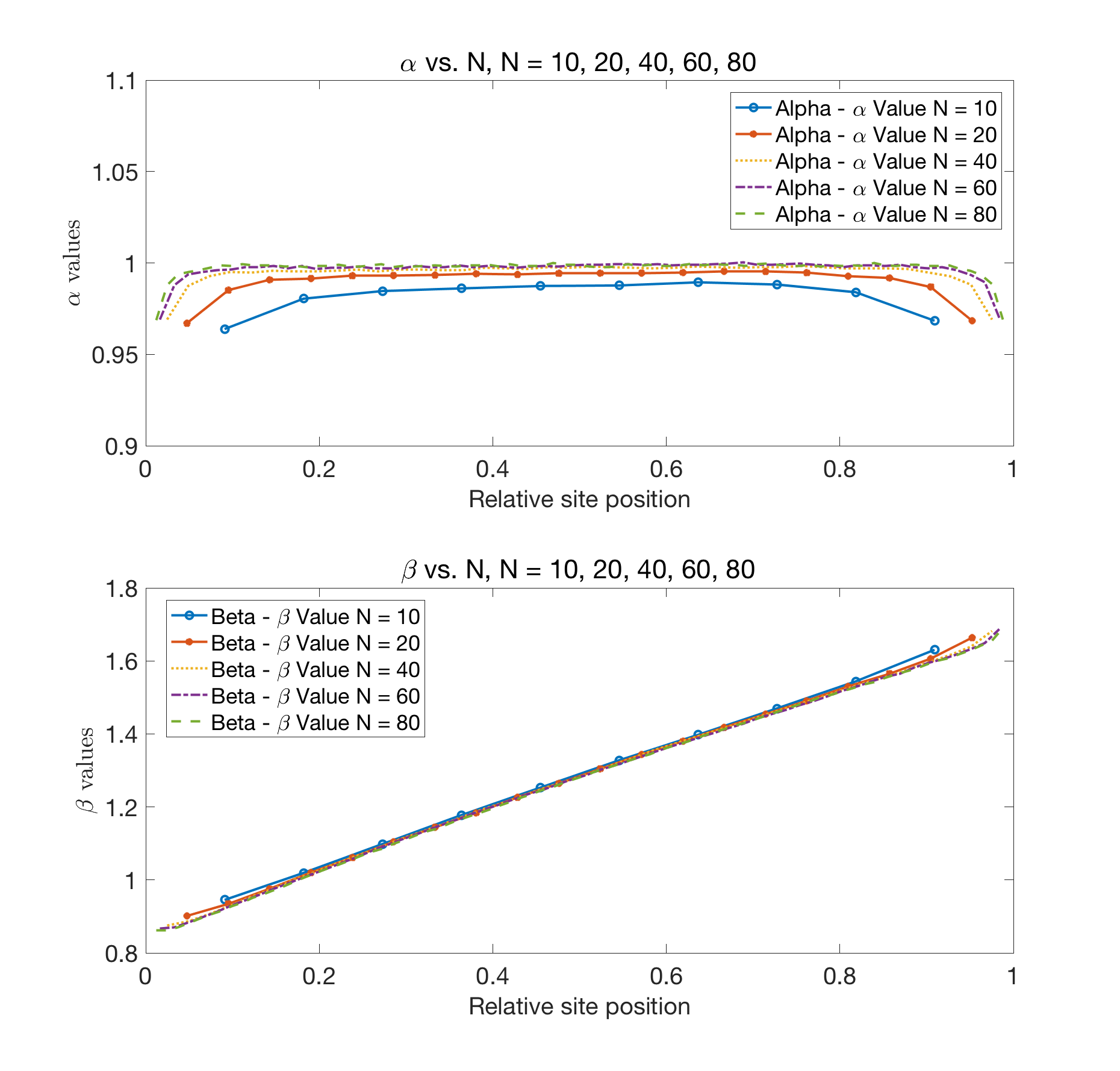}}
\caption{Parameters $\alpha$ and $\beta$ of the Gamma distribution
  fitted from marginal distributions with respect to all sites. The
  length of the chain is $N = 10, 20, 40, 60, 80$. Rate function
  $R_{1} = \sqrt{E_{1} + E_{2}}$. }
\label{fig3a} 
\end{figure}
 
 \begin{figure}[h]
\centerline{\includegraphics[width = \linewidth]{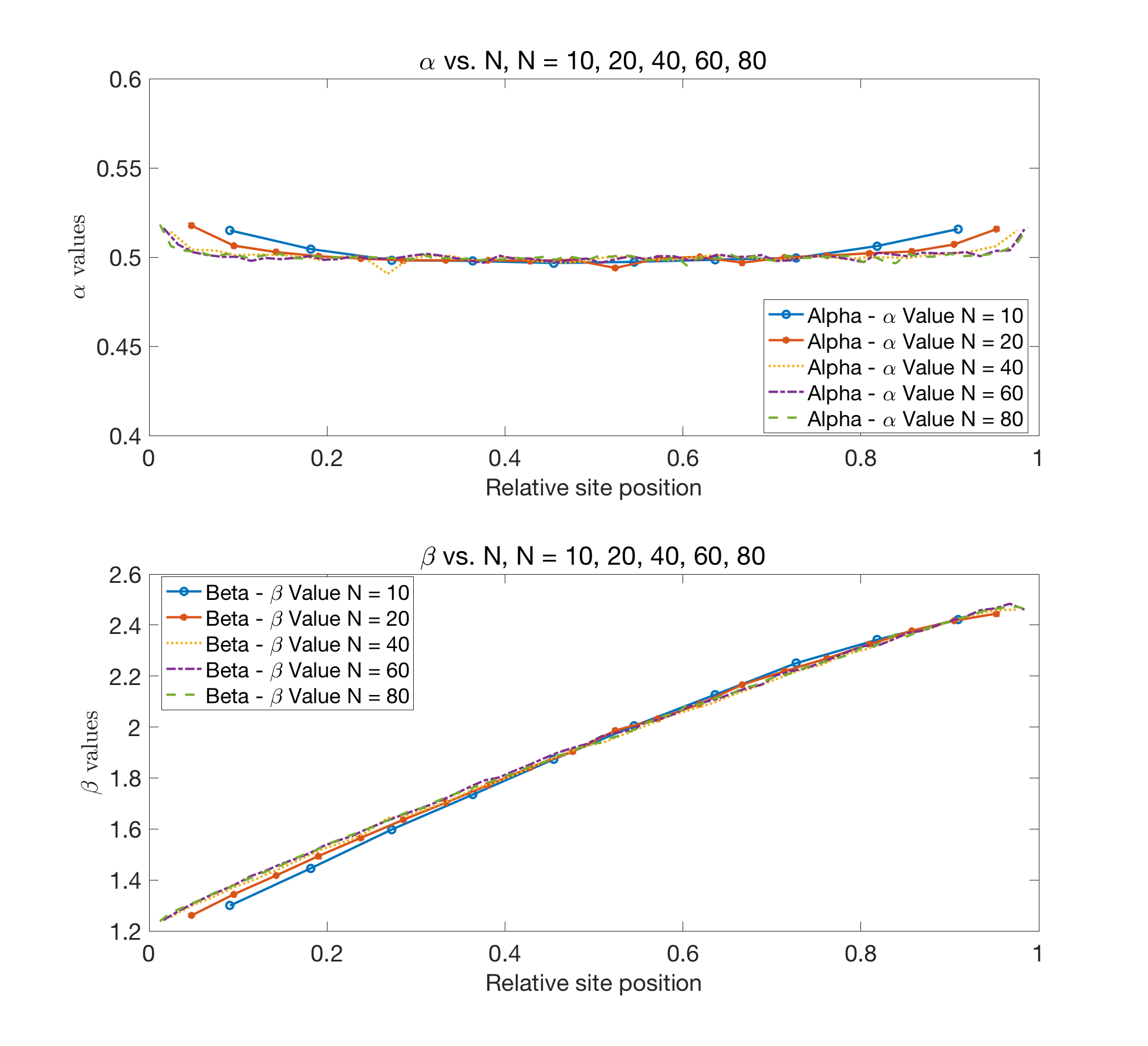}}
\caption{Parameters $\alpha$ and $\beta$ of the Gamma distribution
  fitted from marginal distributions with respect to all sites. The
  length of the chain is $N = 10, 20, 40, 60, 80$. Rate
  function $R_{2} = \sqrt{E_{1}E_{2}/(E_{1} + E_{2})}$.}
\label{fig3b} 
\end{figure}

The goodness of the fit is done by a Chi-square test. We
divide the domain into $31$ bins $[0, 0.2), \cdots, [5.8, 6.0), [6.0,
\infty)$. Let $p_{i}$ be the theoretical probability of the desired
Gamma distribution at each interval and $n_{i}$ be the number of
samples falling to this interval. We calculate
\begin{equation}
  \label{chisquare}
  \chi^{2} = \sum_{ i = 1}^{31}\frac{(n_{i} - \mathbf{N}
    p_{i})^{2}}{\mathbf{N} p_{i}}
\end{equation}
for each site. If the marginal distribution satisfies a Gamma
distribution, the $\chi^{2}$ test statistics should be smaller than
the $95$th percentile of a $\chi^{2}$ distribution with $30$ degrees of
freedom. Figure \ref{fig4a} and Figure \ref{fig4b} shows our result for the goodness of the
fit. We can see that when the chain is long enough, the marginal
distribution with respect to a non-boundary site is very closed to a Gamma
distribution in both cases, which is exactly the thermal equilibrium
we found in Proposition \ref{infinite}. In other words, LTE is achieved for a {\it
  single} site in the chain as the length of the chain grows.

\begin{figure}[h]
\centerline{\includegraphics[width = \linewidth]{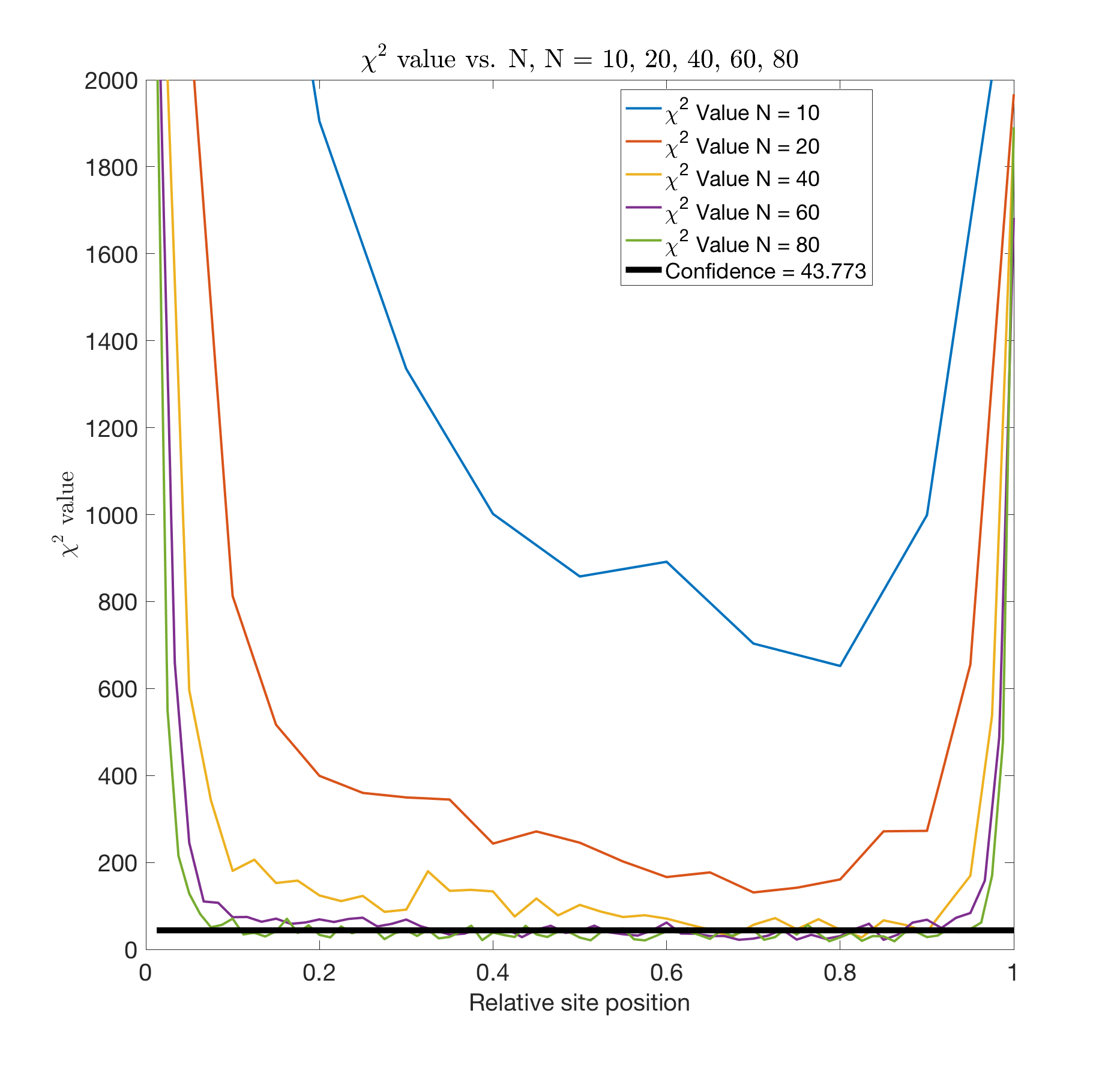}}
\caption{Values of $\chi^{2}$-test statistics (equation \eqref{chisquare}) of the marginal
  distribution of each site. $x$-axis: site index. $y$-axis: $\chi^{2} $. Rate
  function $R_{1} = \sqrt{E_{1} + E_{2}}$.}
\label{fig4a} 
\end{figure}
 
 \begin{figure}[h]
\centerline{\includegraphics[width = \linewidth]{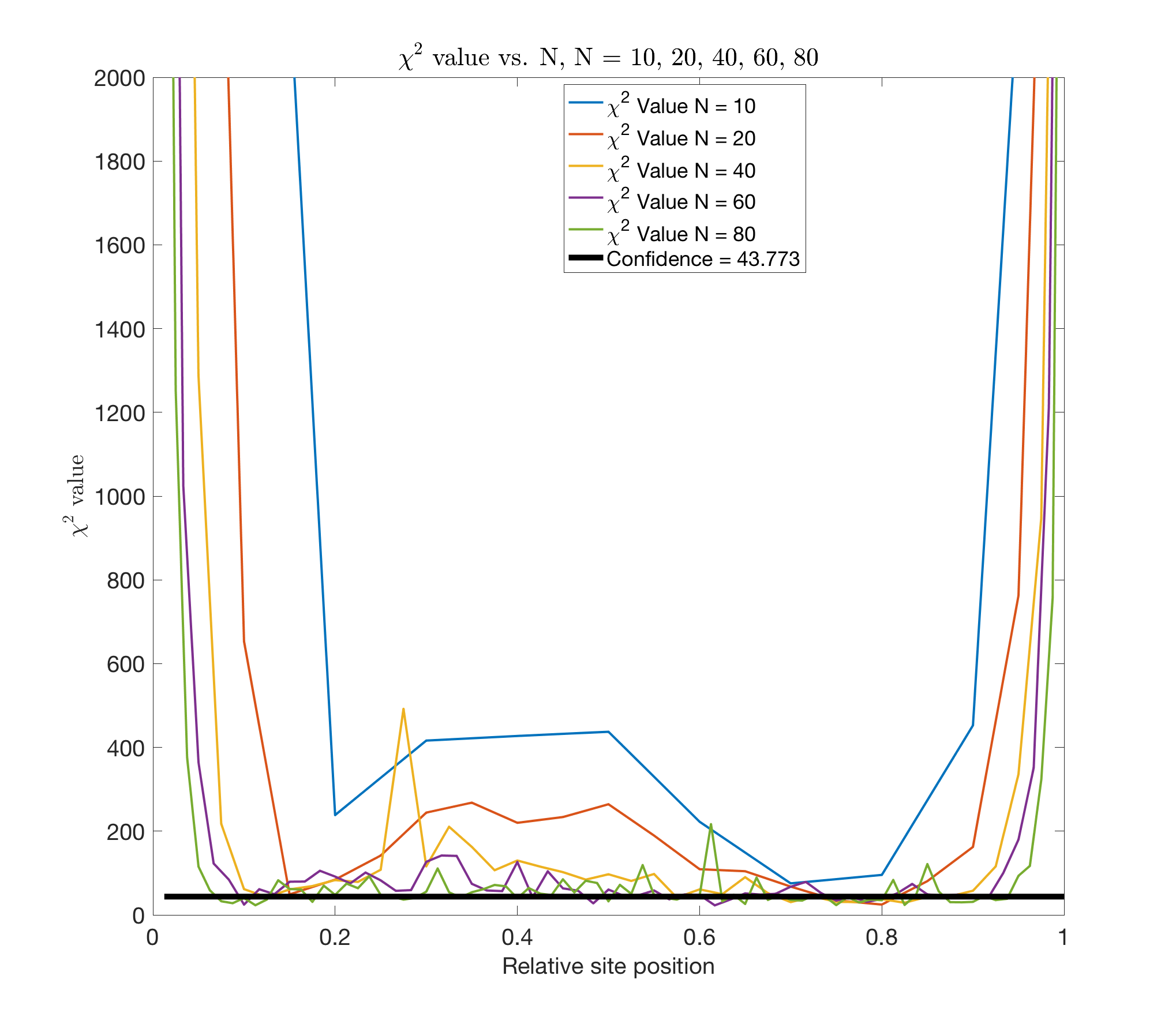}}
\caption{Values of $\chi^{2}$ test statistics (equation \eqref{chisquare}) of the marginal
  distribution of each site. $x$-axis: site index. $y$-axis: $\chi^{2} $. Rate
  function $R_{2} = \sqrt{E_{1}E_{2}/(E_{1} + E_{2})}$.}
\label{fig4b} 
\end{figure}

\medskip

{\bf Numerical Simulation 4.} The next task is to use the energy
profile to check the LTE assumption. Assume LTE is achieved, the
theoretical energy flux can be obtained from equations \eqref{fluxphi1}
and \eqref{fluxphi2}. Then we can compare the empirical energy profile with the predicted ones when assuming the LTE. The
result of {\bf Numerical Simulation 3} shows that at the boundary the
marginal distribution is far from the Gamma distribution. Hence we can
only use Propositions \ref{flux1} and \ref{flux2} to predict the
energy profile in the middle. The
predicted energy profile under the LTE assumption is obtained in the 
following way. Assume that $E_{5}$ and $E_{36}$ are equal to those in the
empirical energy profile. Since the mean
energy flux $J_{i,i+1}$ is independent of the choice of $i$, we can solve a
nonlinear equation involving $E_{6}, E_{7}, \cdots, E_{35}$
numerically such that $J_{6, 7} = J_{7, 8} = \cdots = J_{34,
  35}$, where terms $J_{i,i+1}$ are from equation
\eqref{fluxphi1} for $\Phi^{1}_{t}$ and \eqref{fluxphi2} for $\Phi^{2}_{t}$. This gives the predicted energy profile from site $6$ to site
$35$. The predicted and empirical energy profiles for $\Phi^{1}_{t}$
and $\Phi^{2}_{t}$ are compared in Figure \ref{fig5}. We can
find that in both cases the predicted energy profile is very close to
the empirical one. Note that in the energy profile, the mean energy of
the left (resp. right) boundary site is not close to $T_{L}$
(resp. $T_{R}$). This is because the rule of boundary
interaction is different from that of non-boundary sites. In
particular, the rate of an energy exchange with the left (resp. right)
boundary is $R(T_{L}, E_{1})$ (resp. $R(E_{N}, T_{R})$) regardless the
amount of energy drawn at the boundary. Hence the
effective temperature ``felt'' by a boundary site is not the heat
bath temperature. 

 \begin{figure}[h]
\centerline{\includegraphics[width = \linewidth]{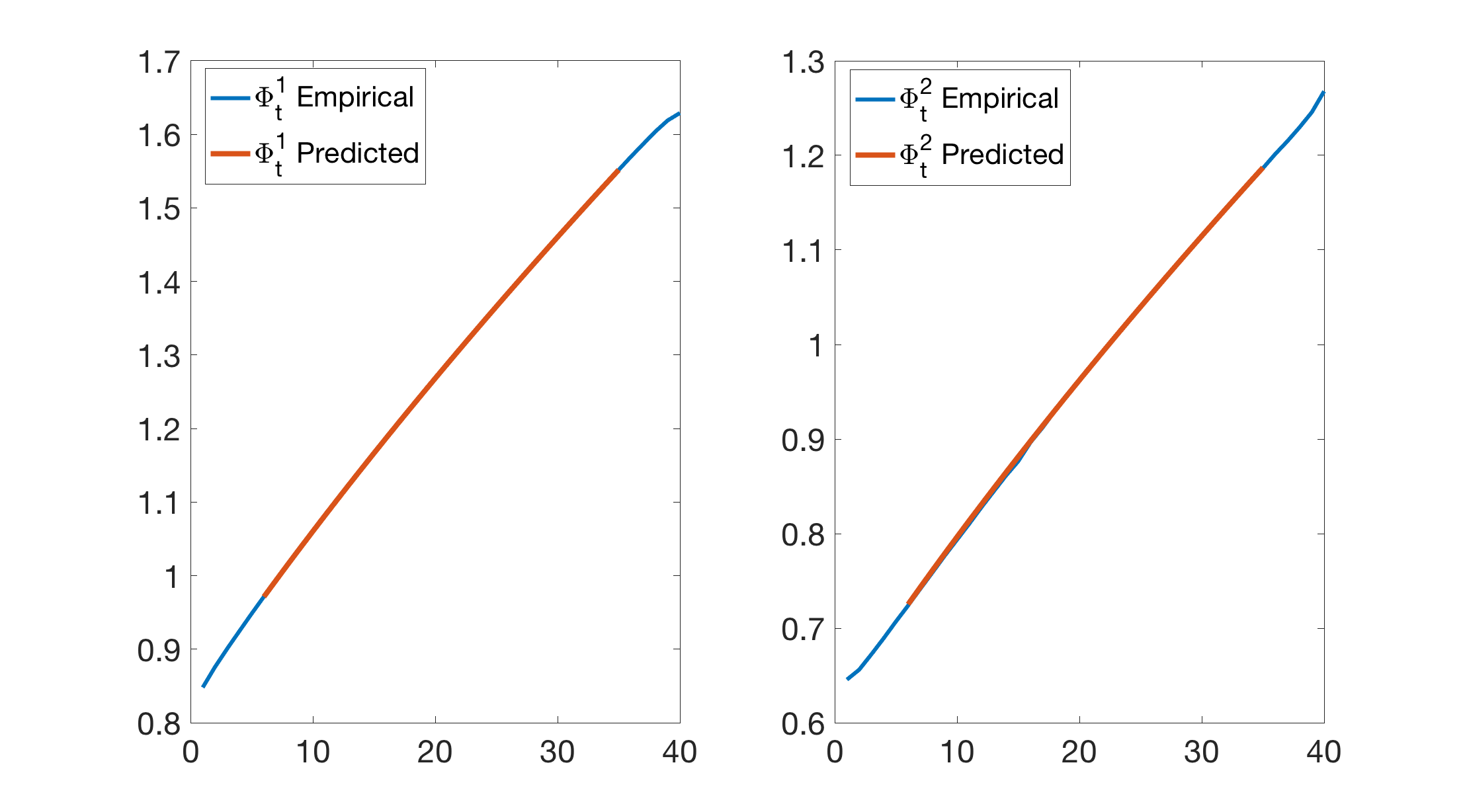}}
\caption{A comparison of predicted and empirical energy profiles. Left
panel: $R_{1} = \sqrt{E_{1} + E_{2}}$. Right panel: $R_{2} =
\sqrt{E_{1}E_{2}/(E_{1} + E_{2})}$. }
\label{fig5} 
\end{figure}

The result in Figure \ref{fig5} suggests that equations
\eqref{fluxphi1} and \eqref{fluxphi2} can produce good approximations
of the macroscopic energy profile. However, the energy profile
alone is not sufficient for us to claim the
existence of LTE, as dependent marginal distributions can still
produce the same energy profile. In fact, when the chain is not long enough, the theoretical
mean energy flux by assuming LTE is quite different from the empirical
energy flux. In order to check whether LTE is achieved, we need to accurately compute the marginal
distribution with respect to nearest neighbor sites.

\medskip

{\bf Numerical Simulation 5.} Finally, we simulate the joint marginal
distribution with respect to two nearest neighbor sites at the center
of the chain. We simulate $8$ long trajectories to generate the joint marginal distribution with
respect to $(E_{N/2}, E_{N/2 + 1})$ for increasing $N$. For each trajectory, we collect
samples of $(E_{N/2}, E_{N/2 + 1})$ at each sampling time $h, 2h,
\cdots, 2\times 10^{8} h$. We choose $h = 0.25$ for $\Phi^{1}_{t}$ and
$h = 1$ for $\Phi^{2}_{t}$ because the average clock rate of
$\Phi^{2}_{t}$ is lower. The reason of doing this is because the time-$h$ skeleton of
a time-continuous Markov process preserves its invariant probability
measure. This approach gives us $1.6 \times 10^{9}$ samples. We need
these many samples to achieve the accuracy needed for verifying the existence of LTE.

Let $a_{0}, a_{1}, \cdots, a_{16} = 0, 0.1,\cdots, 1.6$ and $a_{17} = \infty$. We define two
auxiliary random variables $Z_{1}$ and $Z_{2}$ that represent the
discretization of $E_{N/2}$ and $E_{N/2 + 1}$ with respect to the
partition generated by $a_{0}, a_{1}, \cdots, a_{17}$, respectively. $Z_{1} = i$ (resp. $Z_{2}
= i$) if and only if $E_{N/2} \in [a_{i-1}, a_{i})$ (resp. $E_{N/2 +
  1} \in [a_{i-1}, a_{i})$). In other words
$Z_{1}, Z_{2}$ takes the value $1$ to $17$. We use the collected
$\mathbf{N} = 1.6 \times 10^{9}$ samples to estimate the
probability distributions of $Z_{1}, Z_{2}$ as well as their joint
distributions. If $Z_{1}, Z_{2}$ converges to two independent random
variables as $N \rightarrow \infty$, we believe this implies the
independence of $E_{N/2}$ and $E_{N/2 + 1}$. 

Then we use the extrapolation of $\chi^{2}$ values to decide 
whether $Z_{1}$ and $Z_{2}$ are independent. For $i, j = 1, \cdots,
17$, define $O_{i}$, $O_{j}$, and $O_{ij}$ be the sample size
corresponding to
$\{Z_{1} = i\}$, $\{Z_{2} = j\}$, and $\{Z_{1} = i, Z_{2} =
j\}$ respectively. Let $E_{ij} = O_{i}O_{j}/\mathbf{N}$ be the
expected count of $O_{ij}$. The $\chi^{2}$-value is given by
\begin{equation}
  \label{chi2error}
  \chi^{2}_{N} = \sum_{i = 1}^{17}\sum_{j = 1}^{17}\frac{(O_{ij} -
    E_{ij})^{2}}{E_{ij}} \,.
\end{equation}
If $Z_{1}$ and $Z_{2}$ are independent, then $\chi^{2}_{N}$ should
be less than the $95$th percentile of a chi-square distribution with
degree of freedom $16 \times 16$, denoted by $p_{95}$. We compute $\chi^{2}_{N}$ up to
$N = 160$ for $\Phi^{1}_{t}$ and $N = 240$ for $\Phi^{2}_{2}$. (Because
the simulation of $\Phi^{1}_{t}$ is slower.) Then we plot $N^{-1}$ versus $\sqrt{\chi^{2}_{N}}$, use a
linear extrapolation to estimate the chi-square score of the limit
case when $N \rightarrow \infty$, and compare it with the square root
of $p_{95}$. The result is demonstrated in
Figure \ref{margin}. The linear extrapolation shows that
$\chi^{2}_{\infty}$ values are less than $p_{95}$ for both
$\Phi^{1}_{t}$ and $\Phi^{2}_{t}$. Hence we believe this simulation result gives a convincing evidence that when $N
\rightarrow \infty$, the marginal distribution with respect to the
nearest neighbor sites in the middle of the chain converges to independent random
variables. 

\begin{figure}[h]
\centerline{\includegraphics[width = \linewidth]{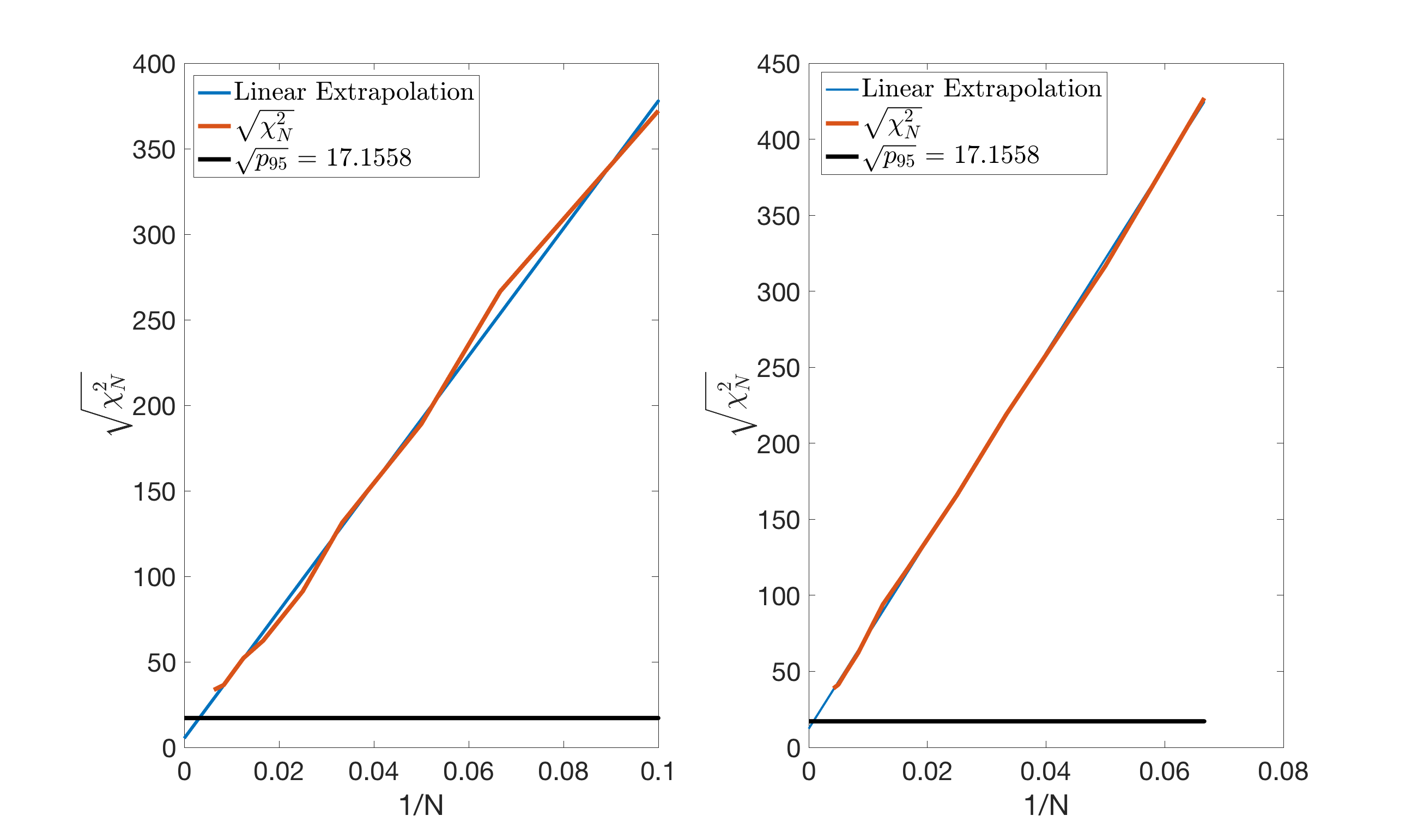}}
\caption{Left: Extrapolation of $\chi^{2}_{N}$ for
  $\Phi^{1}_{t}$. Right: Extrapolation of $\chi^{2}_{N}$ for
  $\Phi^{2}_{t}$. Red: $\sqrt{\chi^{2}_{N}}$ vs. $1/N$. Blue: linear
  extrapolation. Black: $\sqrt{p_{95}}$, square root of the $95$th
  percentile of the $\chi^{2}$ distribution with $256$ degrees of freedom.}
\label{margin}
\end{figure}

We remark that in order to get the desired accuracy, one needs to
efficiently generate large number of samples to approximate the
invariant probability measure. This is achieved by using
the Hashing-Leaping Method proposed in \cite{li2015fast} and parallel
computation.  

\section{Conclusion}
In this paper we study a stochastic energy exchange model with two
different rate functions that corresponding to different ways of
reduction from a deterministic billiards-like heat conduction
model. The Markov chain generated by the model with rate function
$R_{1} = \sqrt{E_{1} + E_{2}}$ (resp. $R_{2} = \sqrt{E_{1}E_{2}/(E_{1}
+ E_{2})}$) is denoted by $\Phi^{1}_{t}$ (resp. $\Phi^{2}_{t}$). Two
processes have fundamental difference when a low energy site appears. In
$\Phi^{1}_{t}$, a low energy site can be quickly ``rescued'' by its
neighbors. However, the rate function of $\Phi^{2}_{t}$ means a low
energy site can only be recovered by itself, which usually takes a
long time. It is known that this low energy site effect causes the
speed of convergence to the invariant probability measure much
slower. We are interested in the difference of macroscopic
thermodynamic properties caused by this difference. Since an explicit
formula of the nonequilibrium steady state (NESS) is not possible, we carry out a series of numerical studies in this paper.

The first study is about the thermal conductivity. We first proved
that the thermal conductivity is a well-defined and computable
quantity. Our simulations
find that both models have thermal conductances that are
proportional to $1/N$. This implies a ``normal'' thermal conductivity
that is independent of the system size. In other words, the low energy site effect of
$\Phi^{2}_{t}$ does not qualitatively change the
thermal conductivity. However, the thermal conductivity is
quantitatively reduced by the low energy site effect. This can be verified
by comparing the thermal conductivities of 1D and 2D models. In a 2D model
the energy transport can bypass the low energy site. Hence an increase of
thermal conductivity for $\Phi^{2}_{t}$ is observed in 2D, while the
thermal conductivity of $\Phi^{1}_{t}$ is roughly unchanged. 

The next study is on the marginal distributions at the NESS, which is
related to the existence of local thermodynamic equilibrium (LTE). Our
numerical and analytical studies show that when the chain is
sufficiently long, for both $\Phi^{1}_{t}$ and $\Phi^{2}_{t}$, the marginal distribution with respect to a
non-boundary site approaches to a theoretical thermal
equilibrium (Gamma distribution). Additional carefully designed numerical studies reveal that the
marginal distribution of nearest neighboring sites also approaches to
the product of two independent Gamma distributions regardless of the rate function. However, the low energy site effect of $\Phi^{2}_{t}$
causes the chain to be more ``sticky''. As a result, for the same
system size, nearest neighbor
sites of $\Phi^{2}_{t}$ are more dependent than those of
$\Phi^{1}_{t}$. 

Understanding the consequence of the ``low energy site effect'' is an
important step in the derivation of macroscopic thermodynamic laws from
nonequilibrium billiards-like dynamics. Recall that the stochastic
energy exchange model $\Phi^{2}_{t}$ serves as an approximation of the
time evolution of the energy profile of a billiard model. In our
recent paper \cite{li2018billiards}, we consider many particles that
are trapped in the same cell as described in Figure \ref{table}. A more realistic stochastic
energy exchange model is then derived from simulating this billiard
model. A weaker ``low energy site effect'' is still observed in this
stochastic energy exchange model. The new stochastic energy
exchange model in \cite{li2018billiards} is very important because it
has a mesoscopic limit equation. Many important macroscopic
thermodynamic properties like the Fourier's law, 
the long-range correlation, and the fluctuation theorem, can be
derived from this mesoscopic limit equation rigorously. Numerical
computation in this paper shows that the ``low energy site 
effect'' does not qualitatively change key macroscopic thermodynamic
properties. This not only answers questions about $\Phi^{2}_{t}$ asked by
several researchers in the field, but also makes the ongoing and
future study on the new stochastic energy exchange model in
\cite{li2018billiards} more convincing. In this sense, the study in
the present paper improves our understanding about how macroscopic thermodynamic laws are
derived from microscopic Hamiltonian dynamics.

\bibliography{myref}
\bibliographystyle{amsplain}
\end{document}